\documentclass{article}

\PassOptionsToPackage{numbers, compress}{natbib}



\newcommand{\indist}{\overset{c}{\equiv}}
\usepackage[preprint]{neurips_2020}
\usepackage{amsthm}
\usepackage[utf8]{inputenc} 
\usepackage[T1]{fontenc}    
\usepackage{hyperref}       
\usepackage{url}            
\usepackage{booktabs}       
\usepackage{amsfonts}       
\usepackage{nicefrac}       
\usepackage{microtype}      
\usepackage{enumitem}
\usepackage{multirow}
\usepackage{makecell}
\usepackage{graphicx}
\usepackage{caption}
\usepackage{courier}
\usepackage{multirow}
\usepackage{color}
\usepackage{subcaption}
\usepackage{todonotes}
\usepackage{amssymb}
\usepackage{textcomp,gensymb}
\usepackage{algorithm}
\usepackage{amsmath,nccmath}
\usepackage{algorithmic}
\usepackage{paralist}
\usepackage{cleveref}
\usepackage{amsthm}
\usepackage{mdframed}

\theoremstyle{definition}
\newtheorem{definition}{Definition}[section]
\newtheorem{theorem}{Theorem}

\title{Federated Dynamic GNN with Secure Aggregation}

\author{%
  Meng Jiang \\
  University of Notre Dame \\
  \texttt{mjiang2@nd.edu} \\
   \And
  Taeho Jung \\
  University of Notre Dame \\
  \texttt{tjung@nd.edu} \\
   \AND
  Ryan Karl \\
  University of Notre Dame \\
  \texttt{rkarl@nd.edu} \\
   \And
  Tong Zhao \\
  University of Notre Dame \\
  \texttt{tzhao2@nd.edu} \\
}

%

\begin{document}

\maketitle

\begin{abstract}

Given video data from multiple personal devices or street cameras, can we exploit the structural and dynamic information to learn dynamic representation of objects for applications such as distributed surveillance, \emph{without} storing data at a central server that leads to a violation of user privacy? In this work, we introduce Federated Dynamic Graph Neural Network (Feddy), a distributed and secured framework to learn the object representations from multi-user graph sequences:
\emph{i)} It aggregates structural information from nearby objects in the current graph as well as dynamic information from those in the previous graph. It uses a self-supervised loss of predicting the trajectories of objects.
\emph{ii)} It is trained in a federated learning manner. The centrally located server sends the model to user devices. Local models on the respective user devices learn and periodically send their learning to the central server without ever exposing the user's data to server.
\emph{iii)} Studies showed that the aggregated parameters could be inspected though decrypted when broadcast to clients for model synchronizing, after the server performed a weighted average. We design an appropriate aggregation mechanism of secure aggregation primitives that can protect the security and privacy in federated learning with scalability.
Experiments on four video camera datasets (in four different scenes) as well as simulation demonstrate that Feddy achieves great effectiveness and security.

\end{abstract}


\section{Introduction}

Distributed surveillance systems have the ability to detect, track, and snapshot objects moving around in a certain space \cite{valera2005intelligent}. The system is composed of several smart cameras (e.g., personal devices or street cameras) that are equipped with a high-performance onboard computing and communication infrastructure and a central server that processes data \cite{bramberger2006distributed}. Smart survelliance is getting increasingly popular as machine learning technologies (ML) become easier to use \cite{chen2019distributed}. Traditionally, Convolutional Neural Networks (CNNs) were integrated in the cameras and employed to identify and segment objects from video streams \cite{yazdi2018new}. Then raw features (\emph{e.g.}, color, position) of objects were sent to the server and used to train a learning model for tracking and/or anomaly detection. In order to render a model for making accurate decisions, it has to learn \emph{higher-level features and patterns} from multi-user or multi-source video data. The model is expected to obtain complex patterns such as vehicles slowing down at traffic circles, stopping at traffic lights, and bicyclists cutting in and out of traffic in an \emph{unsupervised} way.

Graph neural networks (GNNs) have been applied to capture deep patterns in vision data across different problems such as object detection \cite{yazdi2018new}, situation recognition \cite{li2017situation}, and traffic forecasting \cite{li2017diffusion,yu2017spatio}. When objects are identified, a video frame can be presented as a graph where nodes are the objects and links describe the spatial relationship between objects. Yet there are three challenges we identify when designing and deploying GNN models in the distributed surveillance system.

First, an effective, annotation-free task is desired for training a GNN model on long graph sequences. The models are expected to preserve the deep spatial and dynamic moving patterns as described above in the latent representations of objects \cite{manessi2017dynamic,zhou2018dynamic,sankar2018dynamic,pareja2020evolvegcn}. Second, collecting raw video or graph data from a large number of devices and training on a central server would not be a feasible solution, though we need one global model \cite{konevcny2016federated,bonawitz2019towards,yang2019federated}. After being deployed on real hardware, the GNN model should be updated locally at each device as needed, or ``fine-tuned'' with newly-collected local data, to adapt to unique scenarios so it can make decisions on-the-fly. When there are sufficient resources (\emph{e.g.}, communication bandwidth), the model updates can be shared among the devices to let them agree on the global model, which captures various deep patterns learned from individual devices or users. Third, distributed systems are vulnerable to various inference attacks \cite{fredrikson2015model,truex2019demystifying,melis2019exploiting}. Namely, adversaries who observe the updates of individual models are able to infer significant information about the individual training datasets (\emph{e.g.}, distribution or even samples of the training datasets) \cite{abadi2016deep,nasr2019comprehensive}. When the central server is compromised, individual datasets are compromised as well with the inference attacks.

In this work, we propose a novel approach called \underline{Fed}erated \underline{Dy}namic Graph Neural Network (Feddy) to address the three challenges. Generally, it is an unsupervised, distributed, secured framework to learn the object representations in graph sequences from several users or devices.

\emph{i)} We define an MSE loss on the task of future position prediction to train the GNN model. Given node attributes and relational links in a graph sequence before time $t$, the model generates the latent representation of nodes in the graph of time $t$ via neural aggregation functions, and it learns the parameters to predict the node positions at time $t+\Delta t$. In our study, $\Delta t$ is 5 seconds, i.e., 150 graphs as default for 30 fps video. The node attributes include objects' horizontal and vertical positions, object box size, and RGB colors in the center, on the left, right, top, and bottom of the box. The relational links include spatial relationships between nodes within a graph and the dynamic relationship of the same node in neighboring graphs. Spatial and dynamic patterns are preserved by this dynamic GNN model. The boxes were identified by CNN tools, so no human annotation is needed in the training process.

\emph{ii)} We use federated learning (FL) to train one dynamic GNN model across devices without exchanging training data. With FL, individual cameras compute the model updates (\emph{e.g.}, gradients, updated weights) locally, and only these model updates are shared with the central server (called \emph{parameter server}) who trains a global model using the aggregated updates collected from individual devices. The parameter server then shares the trained model with all other devices. FL provides a viable platform for state-of-the-art ML, and it is privacy-friendly because the training data never leaves individuals.

\emph{iii)} We employ secure aggregation to prevent inference attacks launched by malicious parameter servers. Secure aggregation is a primitive that allows a third party aggregator to efficiently compute an aggregate function (\emph{e.g.}, product, sum, average) over individuals' private input values. When this primitive is applied to FL, the parameter server can access the aggregated model only, and adversaries can no longer launch the aforementioned inference attacks to infer individuals' training data.

In our experiments, we use Stanford Drone Dataset\footnote{https://cvgl.stanford.edu/projects/uav\_data/} published by the Stanford Computational Vision Geometry Lab. The large-scale dataset collects images and videos of various types of agents (\emph{e.g.}, pedestrians, bicyclists, skateboarders, cars, buses, and golf carts) that navigate in university campus \cite{robicquet2016learning}. We use videos in four scenes such as ``bookstore,'' ``coupa,'' ``hyang,'' and ``little.'' Experimental results demonstrate that the proposed framework can consistently deliver good performance in an unsupervised, distributed, secured manner.

\section{Dynamic GNN for Learning Surveillance Video}

Given a video, suppose the video frame at time $t$ has been processed to form an attributed graph $G^{(t)}=(V^{(t)}, p^{(t)}, g^{(t)}, e^{(t)})$ using object detection techniques (\emph{e.g.}, CNN-based, or as processed in the Stanford Drone Dataset), where
\begin{compactitem}
\item $V^{(t)}$ is the set of nodes (\emph{i.e.}, moving objects \cite{yazdi2018new});
\item $p^{(t)}(v) = [p^{(t)}_{x} (v), p^{(t)}_{y} (v)]: V^{(t)} \rightarrow \mathbb{R}^{2}$ gives two position values (\emph{i.e.}, horizontal and vertical positions) of the center of object $v$ in the frame;
\item $g^{(t)}(v): V^{(t)} \rightarrow \mathbb{R}^{k}$ gives $k$ features of object $v$ such as red/green/blue values on the left and right, at the top and bottom, and in the center;
\item $e^{(t)}(u,v): V^{(t)} \times V^{(t)} \rightarrow \mathbb{R}$ gives the weight of the link between nodes $u$ and $v$  -- the weight can be the Euclidean distance between the two nodes on the frame:
	\begin{equation}
		e^{(t)}(u,v) = {\| p^{(t)}(u) - p^{(t)}(v)  \|}^2, u, v \in V^{(t)}.
	\end{equation}
\end{compactitem}

Given a video with frames at $1 \dots T$, we have attributed graph sequence $G^{(t)} {|}^{T}_{t=1}$. We denote the set of nodes in the graph sequence by $V = {\cup}^T_{t=1} V^{(t)}$. The goal of our approach is to learn the representations of each node (called \emph{node embeddings}) in each graph that preserve spatial information in the graph and dynamic information of the node through past graphs: $f(v,t): V \times \{1 \dots T\} \rightarrow \mathbb{R}^{d}$, denoted by $\mathbf{v}^{(t)}$, where $d$ is the number of dimensions of node embeddings. The node embeddings can be used for tasks such as object tracking, forecasting, and malicious behavior detection.

Our proposed GNN model has two parts: One is an algorithm for node embedding generation given raw data and model parameters; the other are loss function(s) based on self-supervised task(s) for training the model parameters. 

\begin{figure}[t]
	\centering
	\includegraphics[width=0.43\textwidth]{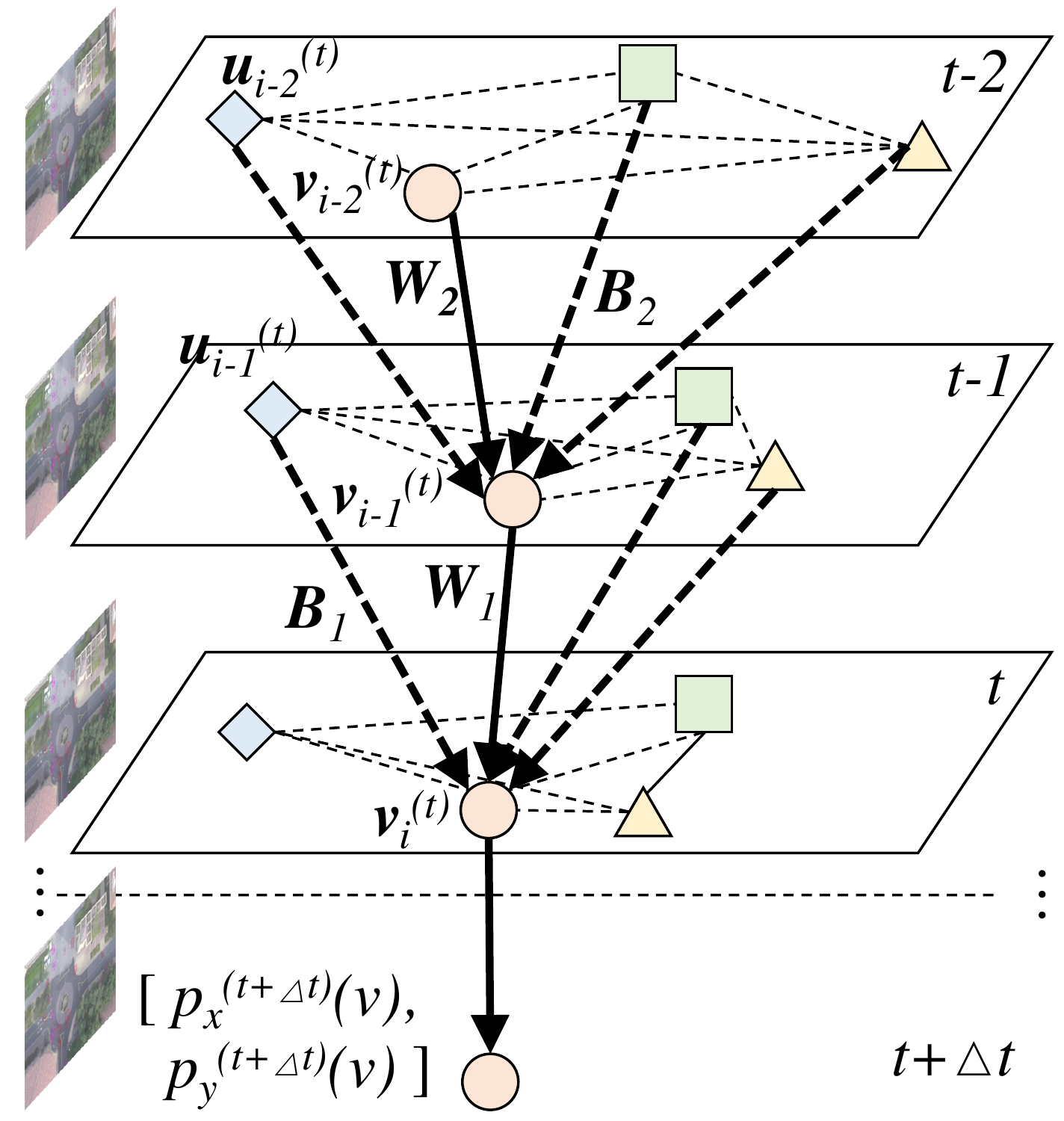}
	\hspace{0.1in}
	\includegraphics[width=0.43\textwidth]{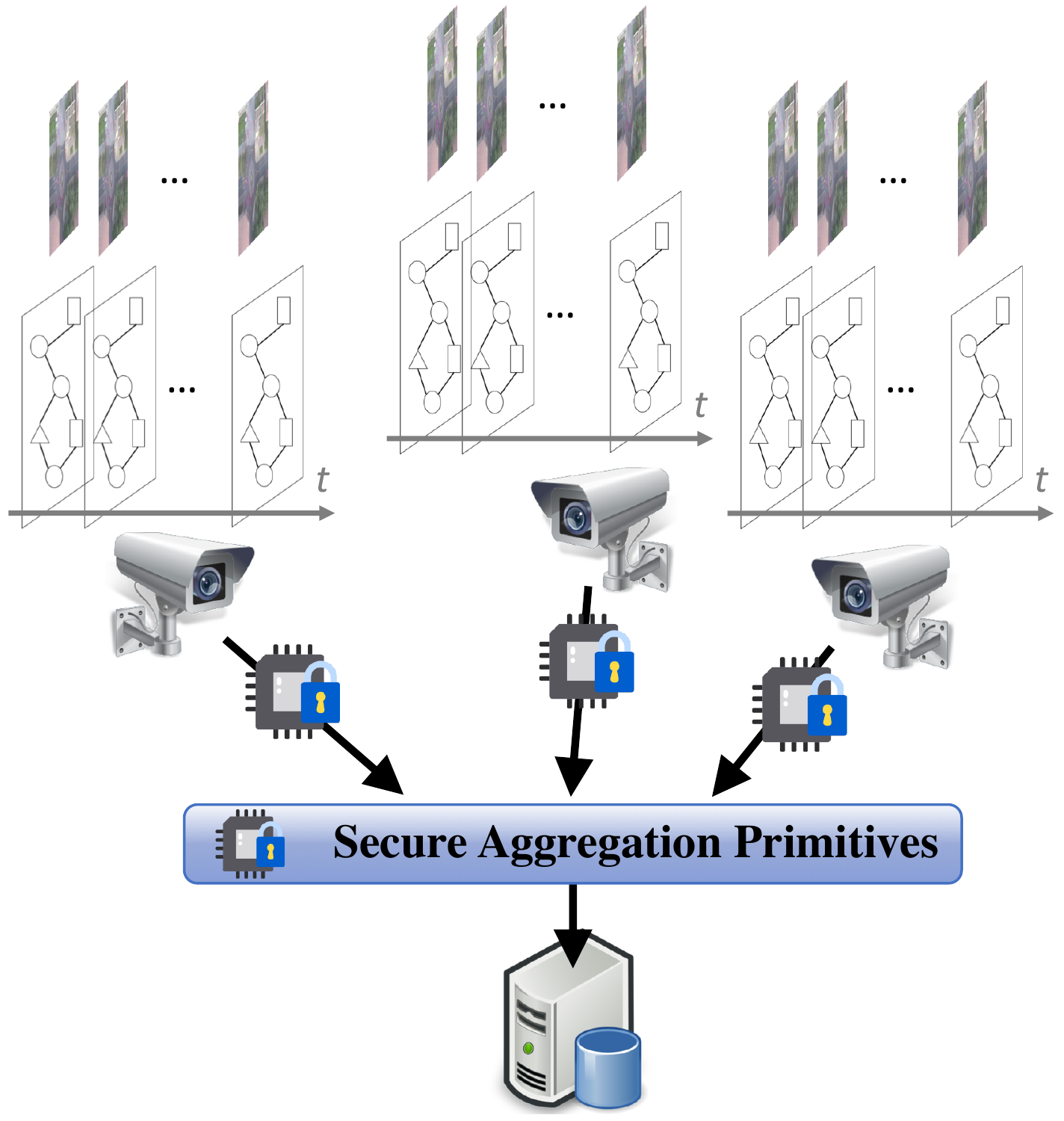}
	\caption{Dynamic GNN aggregates spatial information within graphs and temporal information across neighboring graphs. It is trained by the MSE loss on the task of predicting future positions of objects. (Left) We protect federated averaging functions with novel secure aggregation primitives to prevent inference attack by malicious parameter servers. (Right)}
	\label{fig:framework}
\end{figure}

\textbf{Node embedding generation:} First, we use matrix $\mathbf{M} \in \mathbb{R}^{d \times (k+2)}$ to transform from node's raw features $[p^{(t)}(v),g^{(t)}(v)]$ to the initial latent embeddings $\mathbf{v}^{(t)}_{0} \in \mathbb{R}^{d}$:
	\begin{equation}
		\mathbf{v}^{(t)}_{0} = \sigma \left(\mathbf{M} \cdot [p^{(t)}(v),g^{(t)}(v)] \right),
	\end{equation}
where $\sigma(\cdot)$ is an activation function, which can be sigmoid, hyperbolic tangent, ReLU, etc.

Second, for the $i$-th layer of the neural network ($i \in \{1 \dots n\}$, where $n$ is the number of layers), which means in the $i$-th iteration of the embedding generation algorithm, we generate the embedding vector $\mathbf{v}^{(t)}_{i} \in \mathbb{R}^{d}$ for node $v$ in graph $G^{(t)}$ as follows:
	\begin{equation}
		\mathbf{v}^{(t)}_{i} = \sigma \left( \alpha \cdot \mathbf{B}_{i} \cdot \textsc{Agg}_{u \in V^{(t-1)} \setminus \{v\}} \pi(e^{(t-1)}(u,v)) \cdot \mathbf{u}^{(t-1)}_{i-1} + \beta \cdot \mathbf{W}_{i} \cdot \mathbf{v}^{(t-1)}_{i-1} + (1 - \alpha - \beta) \cdot \mathbf{v}^{(t)}_{i-1} \right),
	\end{equation}
where
(1) $\mathbf{u}^{(t-1)}_{i-1}$ is the embedding vector of node $u$ as a neighboring node of $v$ in graph $G^{(t-1)}$ at the $(i-1)$-th iteration; $\mathbf{v}^{(t-1)}_{i-1}$ is the embedding vector of node $v$ in graph $G^{(t-1)}$ at the $(i-1)$-th iteration; $\mathbf{v}^{(t)}_{i-1}$ is the embedding vector of node $v$ in graph $G^{(t)}$ at the $(i-1)$-th iteration;
(2) $\mathbf{B}_{i} \in \mathbb{R}^{d \times d}$ is the transformation matrix from the aggregated information of neighboring nodes on the $(i-1)$-th layer to the $i$-th layer; $\mathbf{W}_{i} \in \mathbb{R}^{d \times d}$ is the transformation matrix from the embedding of a node on the $(i-1)$-th layer to the $i$-th layer;
(3) $\alpha$ is a hyperparameter weighting the aggregation of neighboring nodes in the previous graph; $\beta$ is a hyperparameter weighting the node in the previous graph;
(4) $\textsc{Agg}$ is an aggregation function, which can be mean pooling, max pooling, or LSTM aggregator,
and (5) $\pi(e)$ defines the importance of aggregating from a neighboring node: A shorter distance (i.e., weight on the link between the two nodes) indicates a higher importance, so one choice is $\pi(e) = \frac{1}{e}$.

The final embedding vectors are $\{ \mathbf{\hat{v}}^{(t)} \equiv \mathbf{v}^{(t)}_{n} \} {|}^{T}_{t=1}$, generated with model parameters including $\mathbf{M}$, $\mathbf{B}_{i} {|}^{n}_{i=1}$, and $\mathbf{W}_{i} {|}^{n}_{i=1}$ from graph sequence data $G^{(t)} {|}^{T}_{t=1}$. Each final embedding vector $\mathbf{\hat{v}}^{(t)}$ preserves structural and dynamic information of node $v$ in $G^{\min\{t-n,1\}} \dots, G^{(t-1)}, G^{(t)}$.

\textbf{Self-supervised loss:} We expect the final embedding vector $\mathbf{\hat{v}}^{(t)}$ can be predictive for future values of node's positions which can be denoted by $[p^{(t+\Delta t)}_{x} (v), p^{(t+\Delta t)}_{y} (v)]$, i.e., $v$'s positions after $\Delta t$ frames. So we have the loss below by introducing $\mathbf{A} \in \mathbb{R}^{2 \times d}$ from latent space to position space:
	\begin{equation}\label{eq:loss}
		\mathcal{L} (\mathbf{M}, \mathbf{B}_{i} {|}^{n}_{i=1}, \mathbf{W}_{i} {|}^{n}_{i=1}, \mathbf{A}) = \sum_{t=2}^{T-\Delta t} \sum_{v \in V^{(t)}} {\| p^{(t+\Delta t)}(v) - \mathbf{A} \cdot \mathbf{\hat{v}}^{(t)} \|}^{2}.
	\end{equation}

\textit{Complexity analysis:} The time complexity of the Dynamic GNN is $O(r^n d^2 \sum_{t=1}^T {|V^{(t)}|})$, where $n$ is the number of layers, $r \leq |V^{(t)}|$ is the number of spatial neighbors for each node, and $d$ is the number of dimensions of node embeddings. The memory complexity is $O(r^n d + n d^2)$. Usually, the number of objects $|V^{(t)}|$ in a single graph is not too big (often between 2 and 10). And $n$ is usually 1, 2, or 3. When $|V^{(t)}|$ turns to be too big, we can sample $r$ spatial neighbors for the training \cite{hamilton2017inductive}.

\section{Federated Learning for Training Distributed Dynamic GNNs}

\textbf{Joint optimization with multi-user graph sequences from distributed cameras.} Given $m$ videos, which can be represented as $G^{(j,t)} {|}^{j=m,t=T_j}_{j=1,t=1}$, where $T_j$ is the number of graphs (\emph{i.e.}, video frames) in the $j$-th graph sequence (\emph{i.e.}, video), we extend the dynamic GNN algorithm presented above to generate final embedding vectors $ \mathbf{\hat{v}}^{(j,t)} {|}^{j=m,t=T_j}_{j=1,t=1}$ and extend the self-supervised loss as follows:
\begin{equation}\label{eq:selfsupervised-loss}
\mathcal{L} = \sum_{j=1}^{m} \sum_{t=2}^{T_j-\Delta t} \sum_{v \in V^{(j,t)}} {\| p^{(j,t+\Delta t)}(v) - \mathbf{A} \cdot \mathbf{\hat{v}}^{(j,t)} \|}^{2}.
\end{equation}


\textbf{Federated optimization.} Privacy, security, and scalability have become critical concerns in distributed surveillance \cite{bonawitz2019towards}. An approach that has the potential to address these problems is federated learning \cite{konevcny2016federated}. Federated learning (FL) is an emerging decentralized privacy-protection training technology that enables clients to learn a shared global model without uploading their private local data to a central server. In each training round, a local device downloads a shared model from the central server, trains the downloaded model over the individuals' local data and then sends the updated weights or gradients back to the server. On the server, the uploaded models from the clients are aggregated to obtain a new global model \cite{zhu2019multi,yang2019federated}. Federated dynamic GNN aims to minimize the loss function in \Cref{eq:selfsupervised-loss} but in a distributed scheme:
\begin{equation}
\min_{\Theta} \mathcal{L}(\Theta) = \sum_{j=1}^{m} \frac{N_j}{N} \mathcal{L}_j (\Theta),~\text{where}~\mathcal{L}^{(j)} (\Theta) = \frac{1}{N_j} \sum_{v \in V^{(j,:)}} \mathcal{L}_v (\Theta),
\end{equation}
where the size of data indexes $N_j = \sum_{t} |V^{(j,t)}|$ and $N = \sum_{j} N_j$, $j$ is the index of $m$ clients, $\mathcal{L}^{(j)} (\Theta)$ is the loss function of the $j$-th local client, and $\Theta$ represents neural parameters of the dynamic GNN model. Optimizing the loss function $\mathcal{L}(\Theta)$ in FL is equivalent to minimizing the weighted average of local loss function $\mathcal{L}(\Theta)$.

Each user performs local training to calculate individual loss $\mathcal{L}^{(j)} (\mathbf{M}, \mathbf{B}_{i} {|}^{n}_{i=1}, \mathbf{W}_{i} {|}^{n}_{i=1}, \mathbf{A})$ (for the $j$-th user), after the loss gradient $\nabla \mathcal{L}^{(j)} (\mathbf{M}, \mathbf{B}_{i} {|}^{n}_{i=1}, \mathbf{W}_{i} {|}^{n}_{i=1}, \mathbf{A})$ is calculated. Each user submits this individual gradient to a central \textit{parameter server} who takes the following gradient descent step:
\begin{equation}\label{eq:federated-GD}
\Theta^{t+1}\leftarrow \Theta^{t}-\eta \cdot {\textsc{FedAvg}}_{j=1 \dots m} \left( \alpha_j \nabla \mathcal{L}^{(j)} (\mathbf{M}, \mathbf{B}_{i} {|}^{n}_{i=1}, \mathbf{W}_{i} {|}^{n}_{i=1}, \mathbf{A}) \right),
\end{equation}
where $t$ is for iteration (not time or graph index), $\Theta$ is for neural network parameters, $\eta$ is the learning rate, and $\alpha_j$ is the weight of the $j$-th user's individual gradient in the weighted average $\textsc{FedAvg}$. $\alpha_j$ is usually proportional to the size of the user's dataset. The algorithm based on $\textsc{FedAvg}$ can effectively reduce communication rounds by simultaneously increasing local training epochs and decreasing local mini-batch sizes \cite{mcmahan2016communication,zhu2019multi}.

\section{Privacy-preserving FL with Improved Secure Aggregation}\label{sec:secure-aggregation}

Regular FL without explicit security mechanisms has been shown to be vulnerable to various inference attacks \cite{melis2019exploiting,nasr2019comprehensive,truex2019demystifying,abadi2016deep,fredrikson2015model}. Namely, adversaries who observe the updates of individual models become able to infer significant information about the individual training datasets (e.g., distribution of the training datasets or even samples or training datasets). When the parameter server in the FL is compromised, individual data (e.g., raw videos or the features) is compromised as well with the inference attacks.

A secure aggregation scheme allows a group of distrustful users $\nu\in U$ ($U$ is the set of $m$ users) with private input $x_\nu$ to compute an aggregate value ${\textsc{FedAvg}}_{\nu\in U}x_\nu$ such as $\sum_{\nu\in U}x_\nu$without disclosing individual $x_\nu$'s to others. Similar to existing work \cite{bonawitz2016practical,bonawitz2017practical,bonawitz2019towards}, we leverage  secure aggregation to thwart attacks. However, we take advantage of different approaches and mitigate their shortcomings. 

\textbf{Secure aggregation with pair-wise one-time pads.} In the secure aggregation adopted by Bonawitz \textit{et al.} \cite{bonawitz2016practical,bonawitz2017practical,bonawitz2019towards}, a user $\nu$ chooses a random number $s_{\nu,\mu}\in \mathbb{Z}_q$ for every other user $\mu$, where $\mathbb{Z}_q$ is a set of integers $\{0,1,\cdots,q-1\}$. Specially, $s_{\nu,\mu}=0$ if $\nu=\mu$. Then, all pairs of users $\nu$ and $\mu$ exchange $s_{\nu,\mu}$ and $s_{\mu,\nu}$ over secure communication channels and compute the one-time pads as $p_{\nu,\mu}=s_{\nu,\mu}-s_{\mu,\nu}$. Then, each user masks $x_\nu$ as $y_\nu=x_\nu+\sum_{\mu\in U}p_{\nu,\mu}\mod q$. Every user $\nu$ sends $y_{\nu}$ to the server who computes the sum over $y_{\nu}$. Then, it follows that
\begin{equation}
    \sum_{\nu\in U}y_\nu = \sum_{\nu\in U}x_\nu+\sum_{\nu\in U}\sum_{\mu\in U} p_{\nu,\mu}= \sum_{\nu\in U}x_\nu +\sum_{\nu\in U}\sum_{\mu\in U}s_{\nu,\mu}-\sum_{\nu\in U}\sum_{\mu\in U}s_{\mu,\nu}=\sum_{\nu\in U} x_\nu~~ (\text{mod}~q)
\end{equation}

Such masking with one-time padding guarantees perfect secrecy (i.e., no information about $x_\nu$ is revealed from $y_\nu$) as long as the bit length of $q$ is larger than the bit length of $x_\nu$. Such secure aggregation requires that all users frequently share one-time pads at every aggregation,
because the one-time pads cannot be reused. This leads to high communication overhead among the users. The benefit of such a scheme is that it does not rely on a trusted key dealer unlike the following scheme.

\textbf{Secure aggregation of time-series data.} In the secure aggregation for time-series data \cite{shi2011privacy,joye2013scalable}, a trusted key dealer randomly samples for each user $\nu\in U$ their one-time pads $\{p_\nu\}{|}_{\nu\in U}$ from $\mathbb{Z}$ such that $\sum_{\nu\in U}p_\nu=0$. This can be done trivially by choosing the first $|U|-1$ numbers randomly and let the last number be the negative sum of the $|U|-1$ numbers (note that $-x$ is equal to $q-x$ modulo $q$ for any $x\in\mathbb{Z}_q$). Then, each $p_\nu$ is securely distributed to each user via secure communication channels. Each user $\nu$ then  masks $x_\nu$ as $y_\nu=(1+ N)^{x_\nu}H(t)^{p_\nu}\mod N^2$, where $H:T\rightarrow \mathbb{Z}_{N^2}$ is a cryptographic hash function, $t$ is the time at which the aggregation needs to be performed, and $N$ is the product of two distinct unknown prime numbers (i.e., RSA number). Then, it follows that:
\begin{equation}
    (\prod_{\nu\in U}y_\nu\big)=\prod_{\nu\in U}H(t)^{p_\nu}\cdot \prod_{\nu\in U}(1+ N)^{x_\nu}=(1+N)^{\sum_{\nu\in U}x_\nu}=(1+N\sum_{\nu\in U}x_\nu)        ~~(\text{mod}~N^2),
\end{equation}
where the last equality holds due to the binomial theorem, i.e., $(1+N)^x=1+xN$ (mod $N^2$). Then, it further follows that:
\begin{equation}
    \frac{\big((\prod_{\nu\in U}y_\nu)-1\big)\mod N^2}{N}\mod N=\sum_{\nu\in U}x_\nu.
\end{equation}

Such masking guarantees semantic security (i.e., no statistical information about $x_\nu$ is disclosed from $y_\nu$) as long as $N$ is sufficiently large and the Decision Composite Residuosity (DCR) problem \cite{joye2013scalable} becomes hard. Such secure aggregation requires a trusted key dealer who performs the computation, however such an entity is hard to find in real-life applications. Even if there is one, it becomes the single-point-of-failure whose compromise leads to the compromise of the whole system. The benefit of such scheme is that it does not require frequent key sharing because $H(t)^{p_\nu}$ is computationally indistinguishable from a random number from $\mathbb{Z}_{N^2}$ as long as $p_\nu$ is kept secret, and users can re-use $p_\nu$ over and over as long as no same $t$ is used in the aggregation.

\textbf{Our improved secure aggregation.} We present our secure aggregation scheme that takes the best of the both worlds. Namely, we combine the two types of secure aggregation schemes above to let users re-use the shared pads/keys without extra sharing or trusted key dealers.

We first let all users $\{\nu\}|_{\nu\in U}$ generate the one-time pads $p_{\nu,\mu}$ in the same way as in the secure aggregation with pair-wise one-time pads. Then, we let every user calculate $p_\nu=\sum_{\mu\in U}p_{\nu,\mu}$, whose sum $\sum_{\nu\in U}p_\nu$ is equal to 0 for some modulus. Then, we let all users use these $\{p_\nu\}|_{\nu\in U}$ to participate in the secure aggregation of time-series data. By doing so, users can use $H(t)$'s and $p_\nu$'s to mask their inputs without relying on a trusted key dealer. At the same time, they can re-use the same pads $p_\nu$'s repeatedly as long as $H(t)$ is different every time. Informally, such a masking guarantees the correct aggregation at the parameter server side, and it also guarantees that adversaries cannot infer any  information related to individual users' input data other than the length of the masked data. We present formal definitions and proofs in the appendix.

\textbf{Mapping between real numbers and integers.} All the computations in our secure aggregation scheme are integer computations, and we need to use integers to represent real numbers. We leverage the fixed point representation \cite{jung2016pda}. Given a real number $x$, its fixed point representation is given by $[x]=\lfloor x\cdot 2^e \rceil$ for a fixed integer $e$. Then, it follows that $[x\pm y]=[x]\pm[y]$. With such  homomorphism, users can convert a real number $x$ to its integer version $[x]$ and participate in the secure aggregation. The third-party aggregator can compute the sum $\sum_{\nu\in U} [x_\nu]$ which is equal to $[\sum_{\nu\in U}x_\nu]$, and one can approximately compute $\sum_{\nu\in U}x_\nu$ by computing the following division: $\sum_{\nu\in U}x_\nu \approx \frac{[\sum_{\nu\in U}x_\nu]}{2^e}$.
The approximation error is bounded above by $2^{-(e+1)}|U|$.

\section{Experiments}

\subsection{Experimental settings}

\begin{table}[h]
    \centering
    \begin{tabular}{l||c|c|c|c||c|c|c|c|c|c}
    \toprule
        Scene & \#Edges & Train & Valid. & Test & Bic. & Ped. & Skate. & Cart & Car & Bus \\
        \midrule
        bookstore & 2,112 & 578 & 298 & 869 & 32.89 & 63.94 & 1.63 & 0.34 & 0.83 & 0.37 \\ \hline
        coupa & 4,901 & 553 & 717 & 2,002 & 18.89 & 80.61 & 0.17 & 0.17 & 0.17 & 0 \\ \hline
        hyang & 3,076 & 822 & 199 & 1,813 & 27.68 & 70.01 & 1.29 & 0.43 & 0.50 & 0.09 \\ \hline
        little & 4,413 & 485 & 602 & 1,519 & 56.04 & 42.46 & 0.67 & 0 & 0.17 & 0.67 \\
    \bottomrule
    \end{tabular}
    \vspace{0.05in}
    \caption{Statistics of four video datasets (graph sequences). Objects in the videos have various types (in \%): Bic. = Bicyclist, Ped. = Pedestrian, Skate. = Skateboarder.}
    \label{tab:datasets}
\end{table}

\textbf{Graph sequence data.} We transform videos in the Stanford Drone Dataset into graph sequences. Each 30-fps video spans over 10,000 frames (\emph{i.e.}, over 333 seconds). We use the first minute for training, last 3 minutes for testing, and the last 4$^{th}$ for validation. Statistics can be found in Table~\ref{tab:datasets}.

\textbf{Parameter settings.} The number of object's raw features is $k+2$: $2$ is for the object box's horizontal and vertical positions. $k=17$ is for box width, box height, and RGB colors at the center, left, right, top, and bottom of the box. The hyperparameters $\alpha$ is set as $0.1$, $\beta$ is set as $0.1$ and the number of layers is set as $2$ for the best performance. Here $\alpha$ is the weight for aggregating spatial information from neighbors; $\beta$ is the weight for aggregating dynamic information from neighboring graphs. The aggregation is applied every 10 epochs.

We will set the number of dimensions of node embeddings $d$ a value in $\{32, 64, 128, 256, 512\}$. We will simulate federated learning with the number of users $m$ in $\{1, 2, 5, 10\}$. $m=1$ means we disable federated optimization but have all data on a single user.

\textbf{Computational resource.} Our machine is an iMac with 4.2 GHz Quad-Core Intel Core i7, 32GB 2400 MHz DDR4 memory, and Radeon Pro 580 8GB Graphics.

\subsection{Results on effectiveness}

Figure~\ref{fig:results} presents experimental results in four different scenes. We use Root Mean Square Error (RMSE) to evaluate the performance of predicting horizontal position $p_x$ and vertical position $p_y$. We observe, first, the RMSE of Federated Dynamic GNN (Feddy) for any number of users and any number of dimensions is smaller than the best method (\emph{i.e.}, MLP) on learning raw features. Second, Feddy performs better when the number of dimensions $d$ is bigger. When $d \geq 128$, the performances under different number of users ($1$, $2$, $5$, or $10$) show very small difference. It means federated optimization achieves a consistent global model.

\begin{figure}[t]
    \centering
    \includegraphics[height=3.42cm]{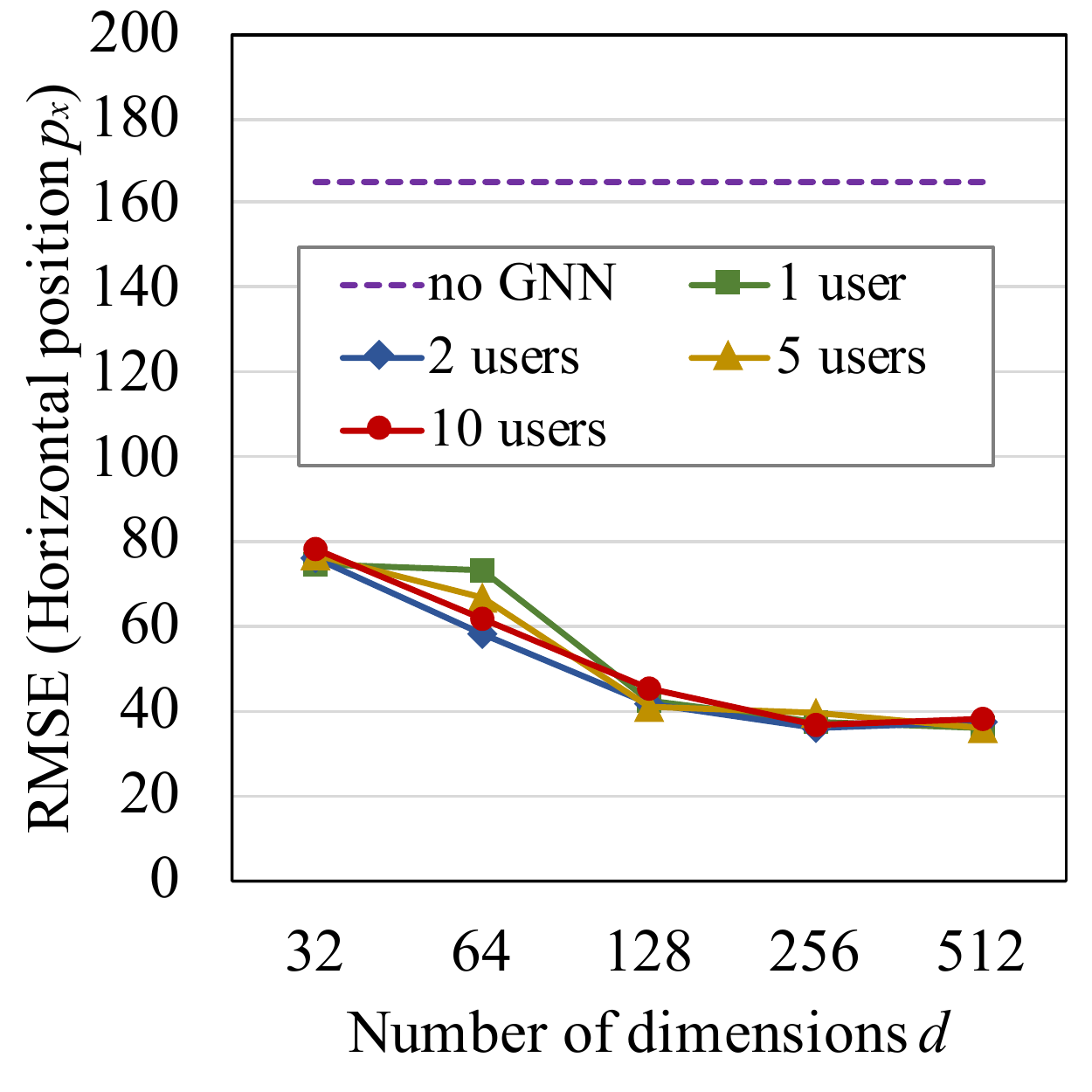}
    \includegraphics[height=3.42cm]{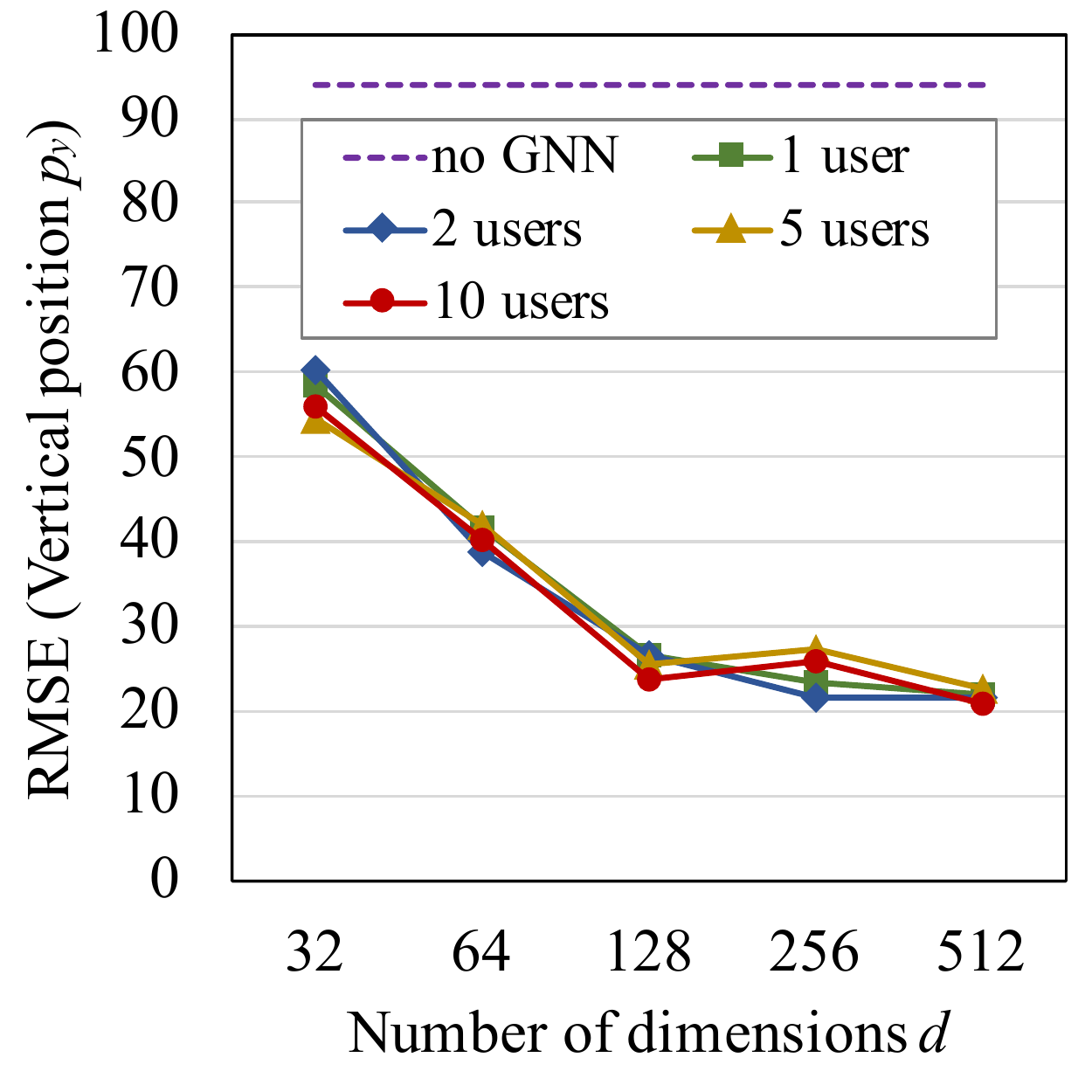}
    \vline
    \includegraphics[height=3.42cm]{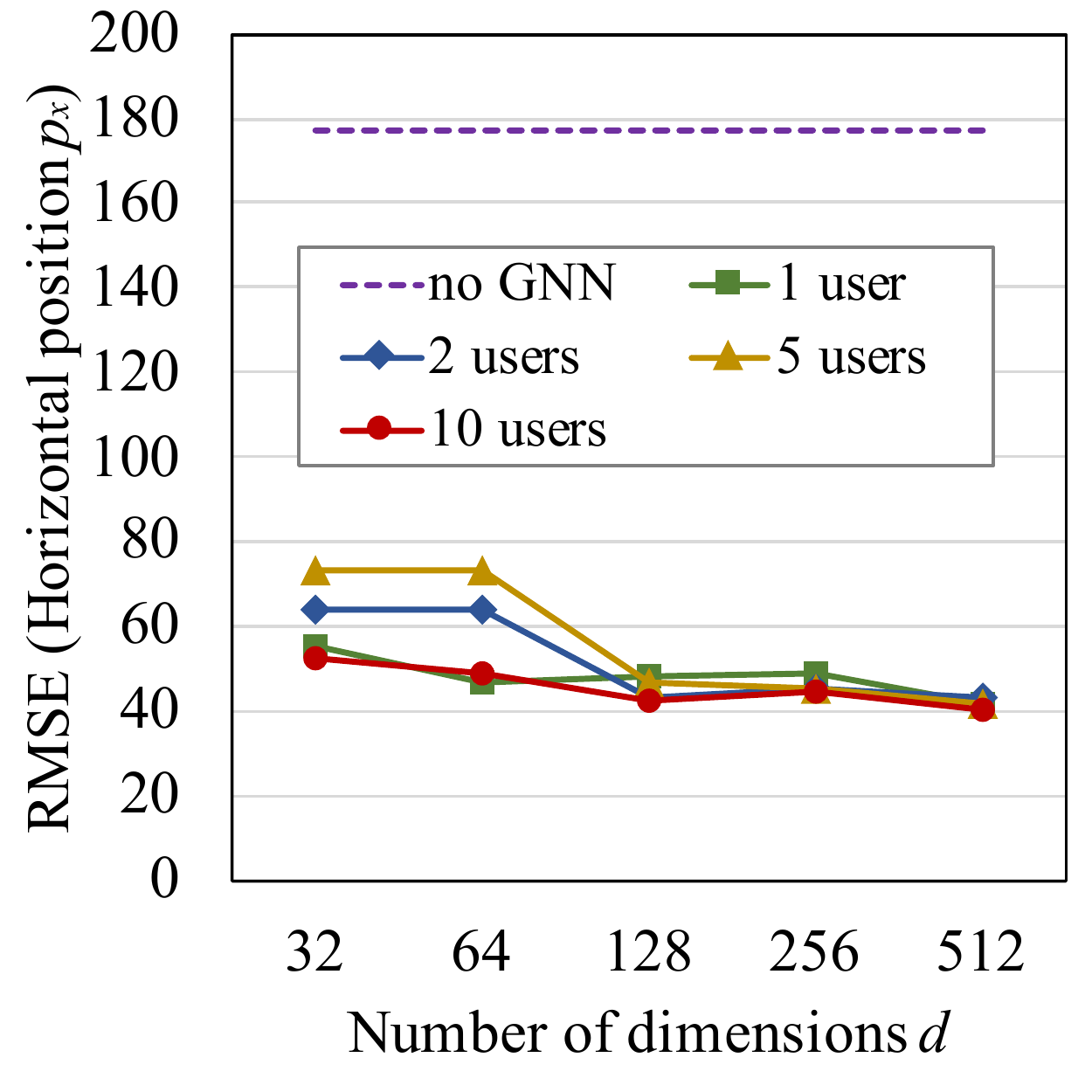}
    \includegraphics[height=3.42cm]{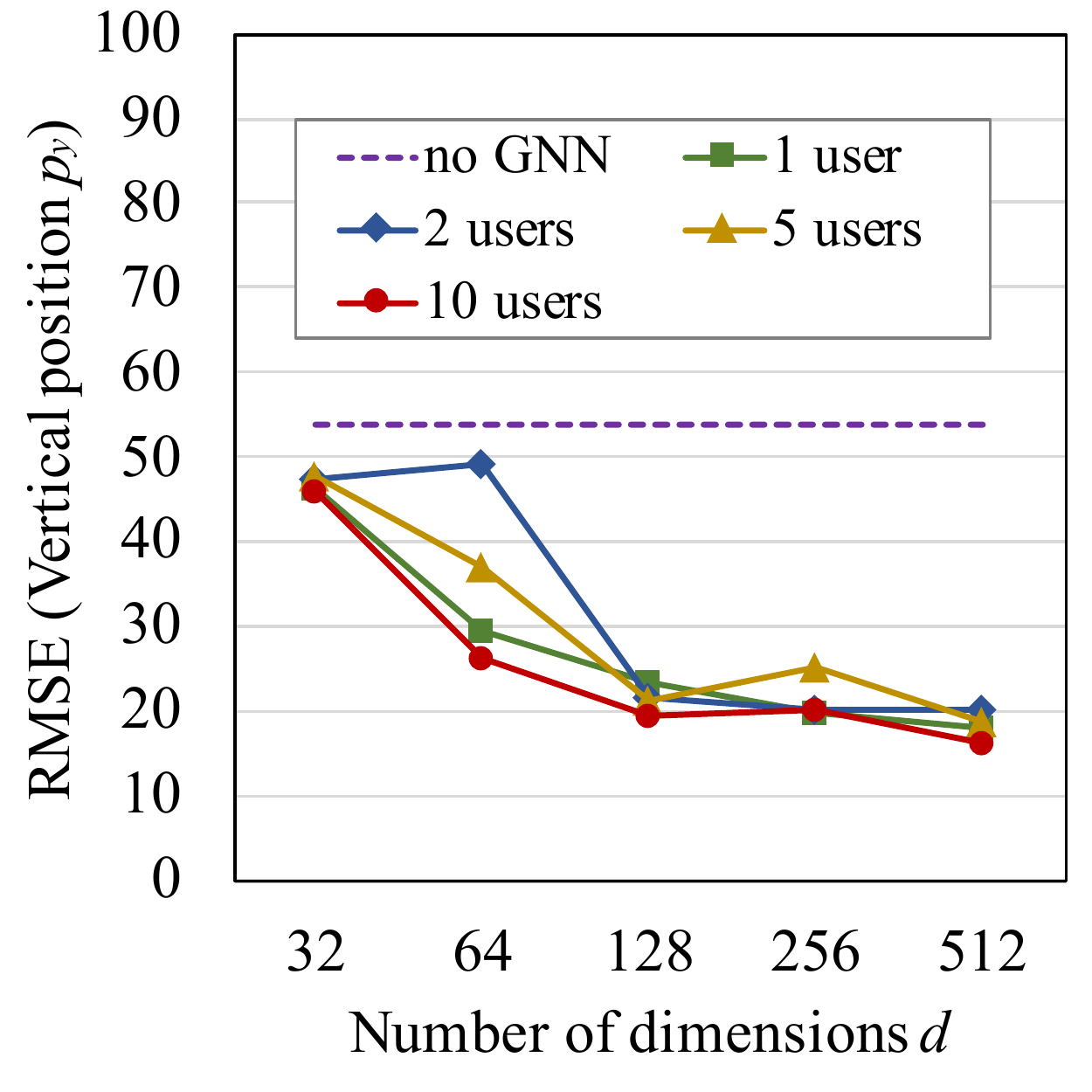} \\
    \includegraphics[height=3.42cm]{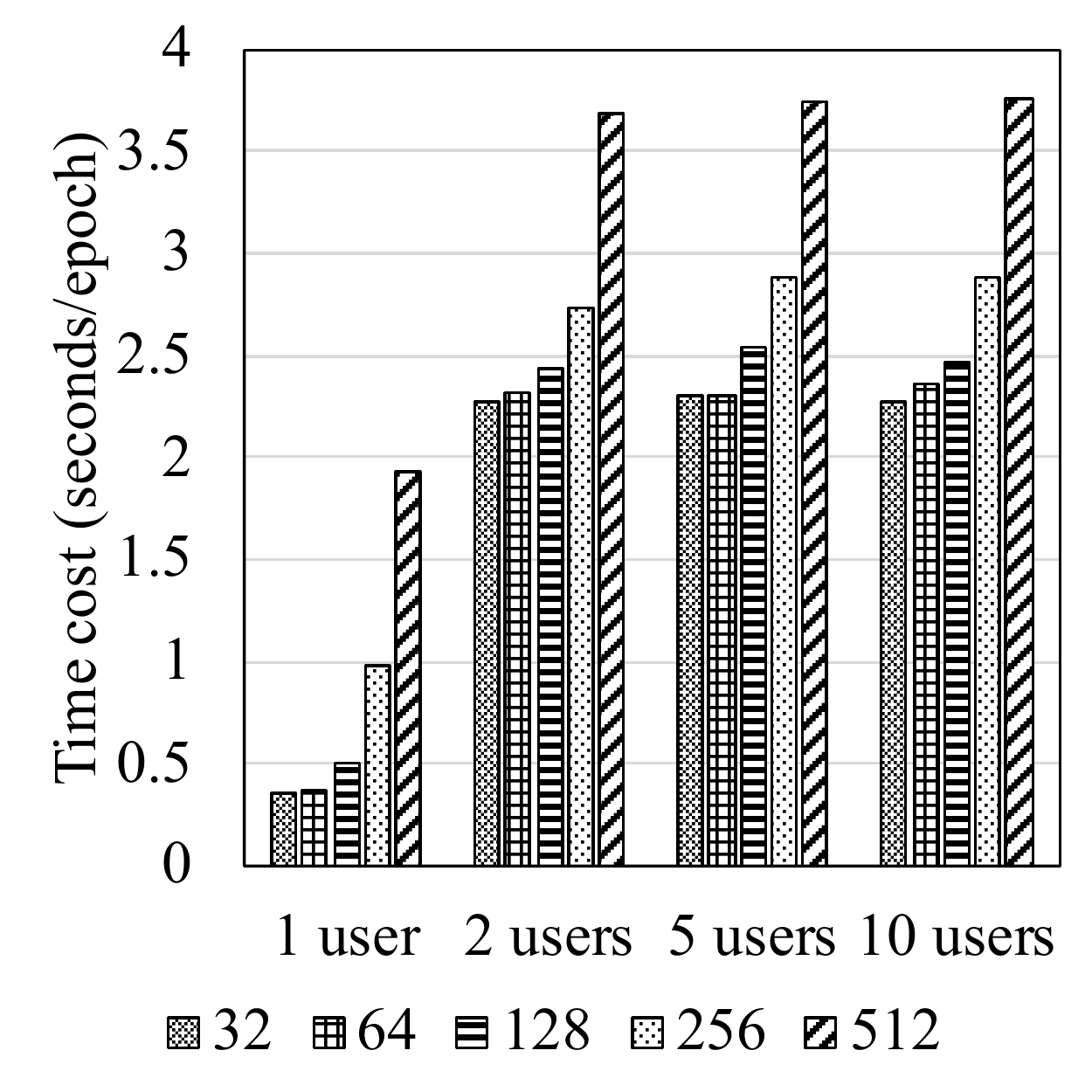}
    \includegraphics[width=3.42cm]{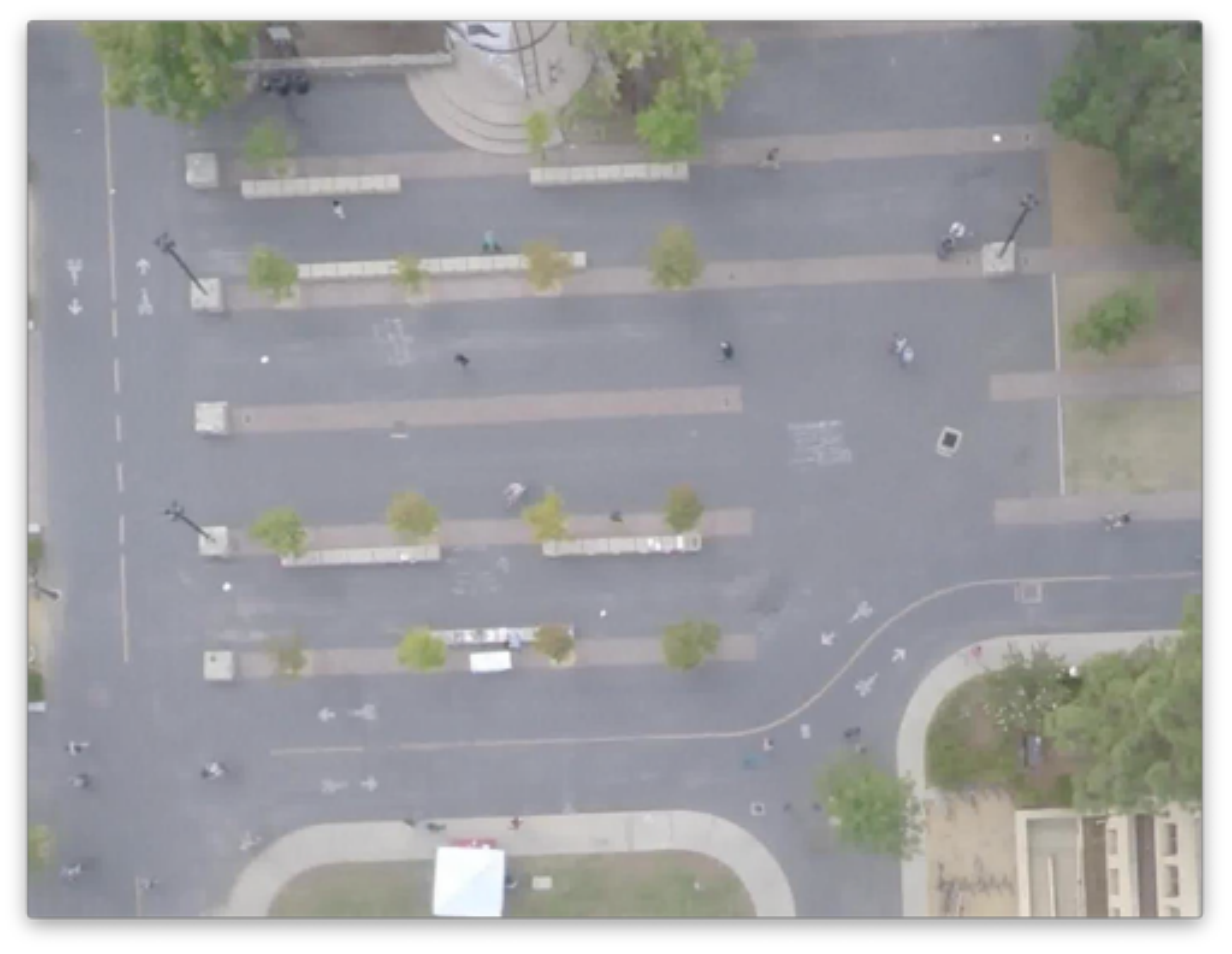}
    \vline
    \includegraphics[height=3.42cm]{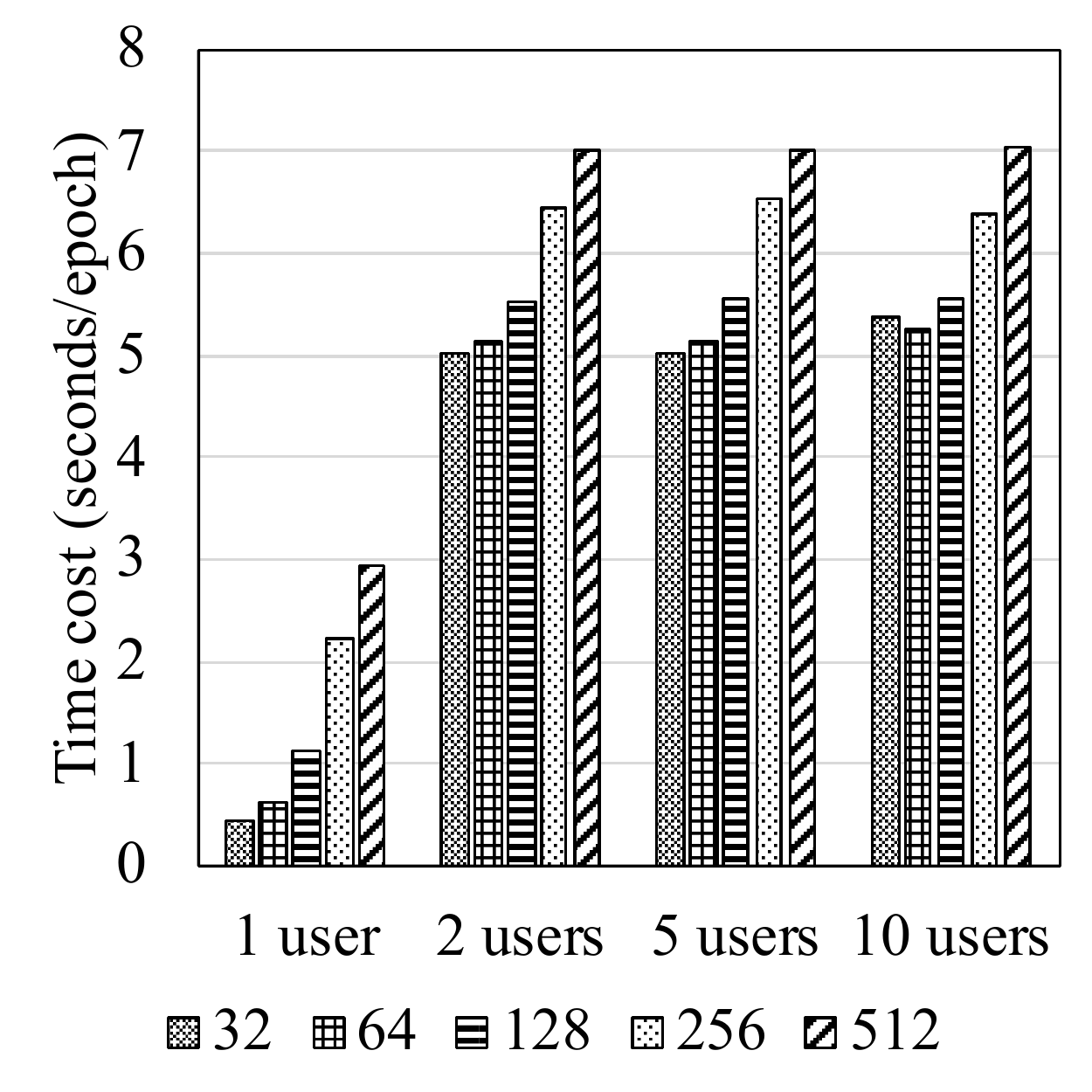}
    \includegraphics[width=3.42cm]{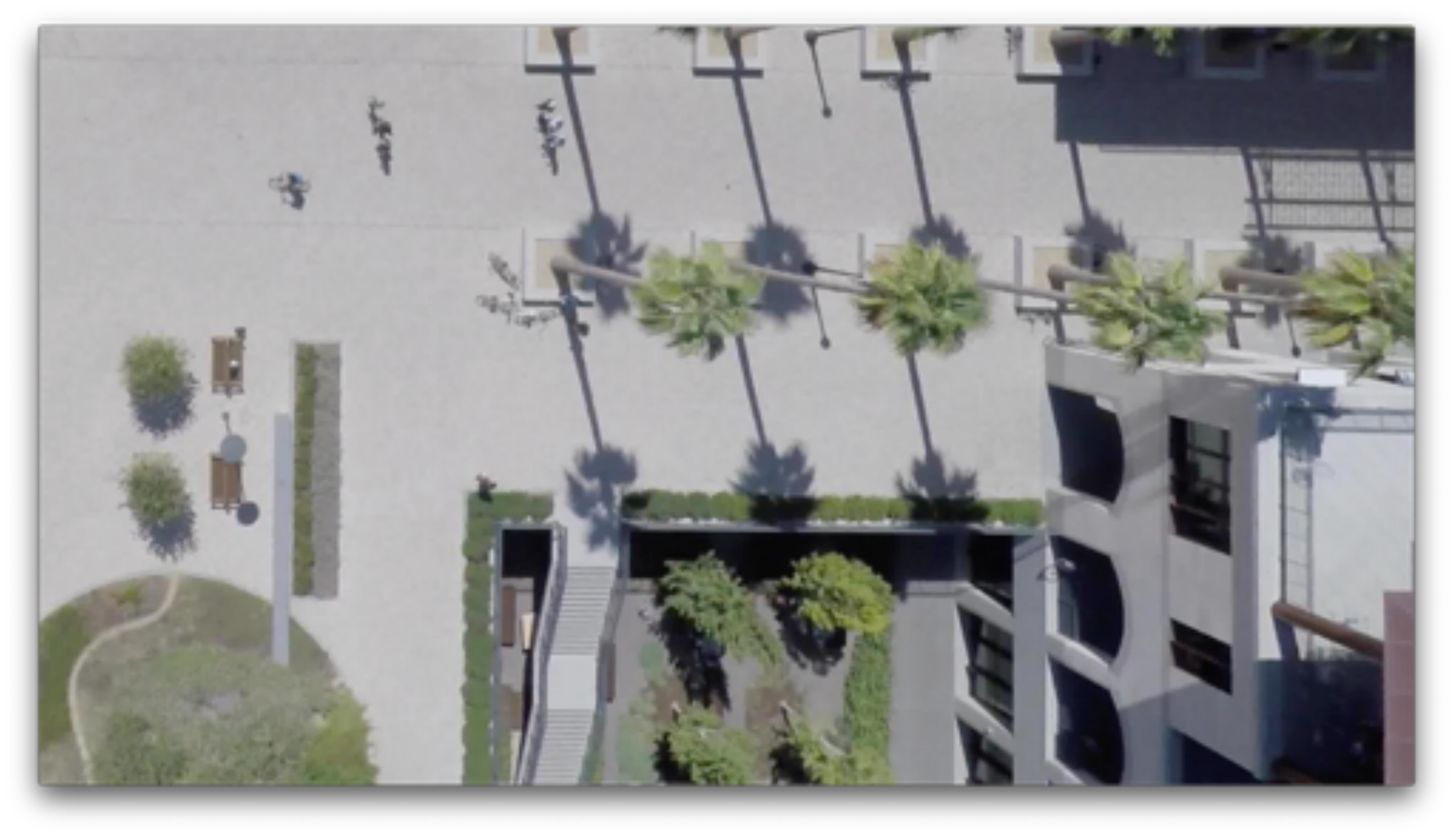} \\
    (a) Stanford Bookstore (bookstore): 1424$\times$1088
    \quad \quad
    (b) Stanford Coupa Cafe (coupa): 1980$\times$1093 \\
    \includegraphics[height=3.42cm]{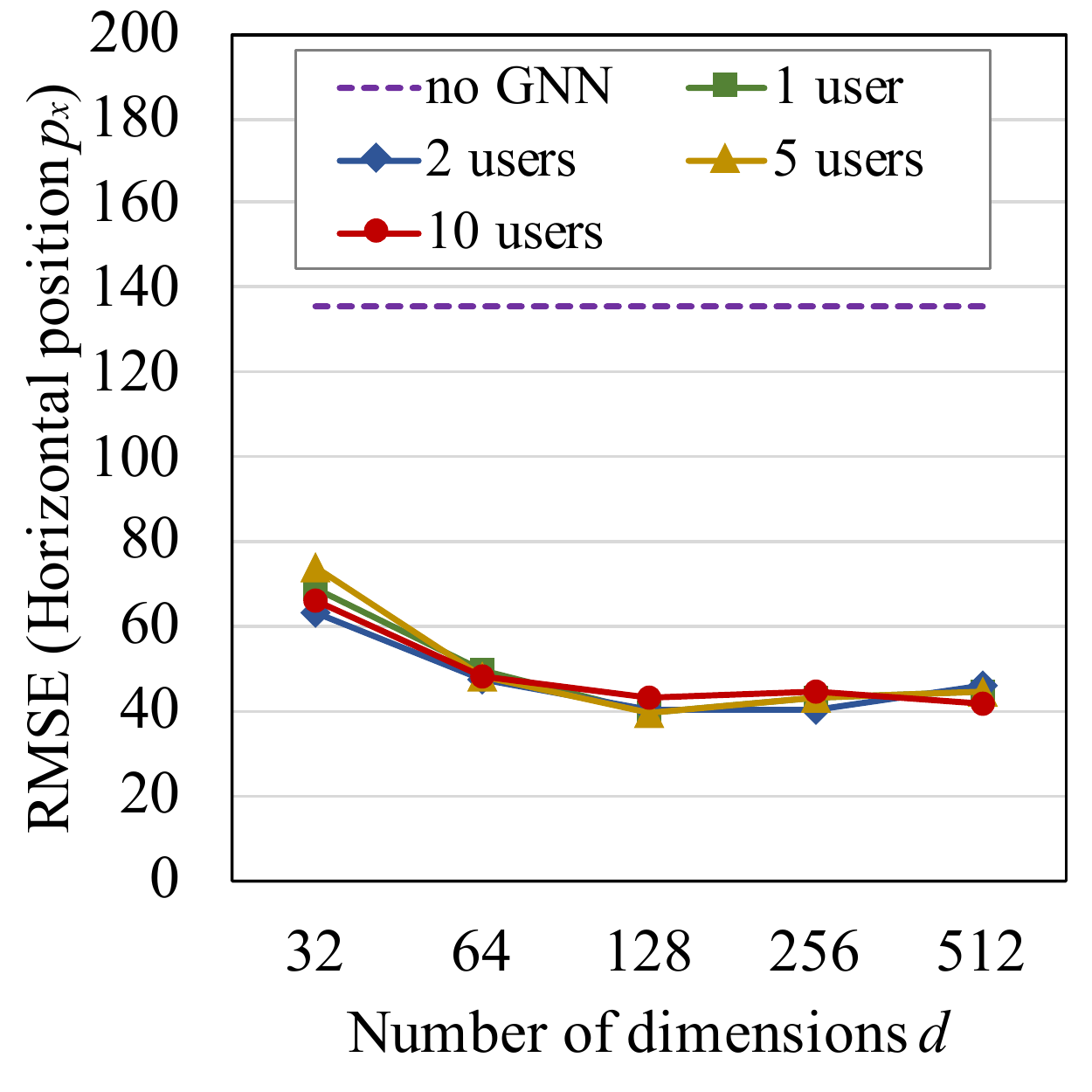}
    \includegraphics[height=3.42cm]{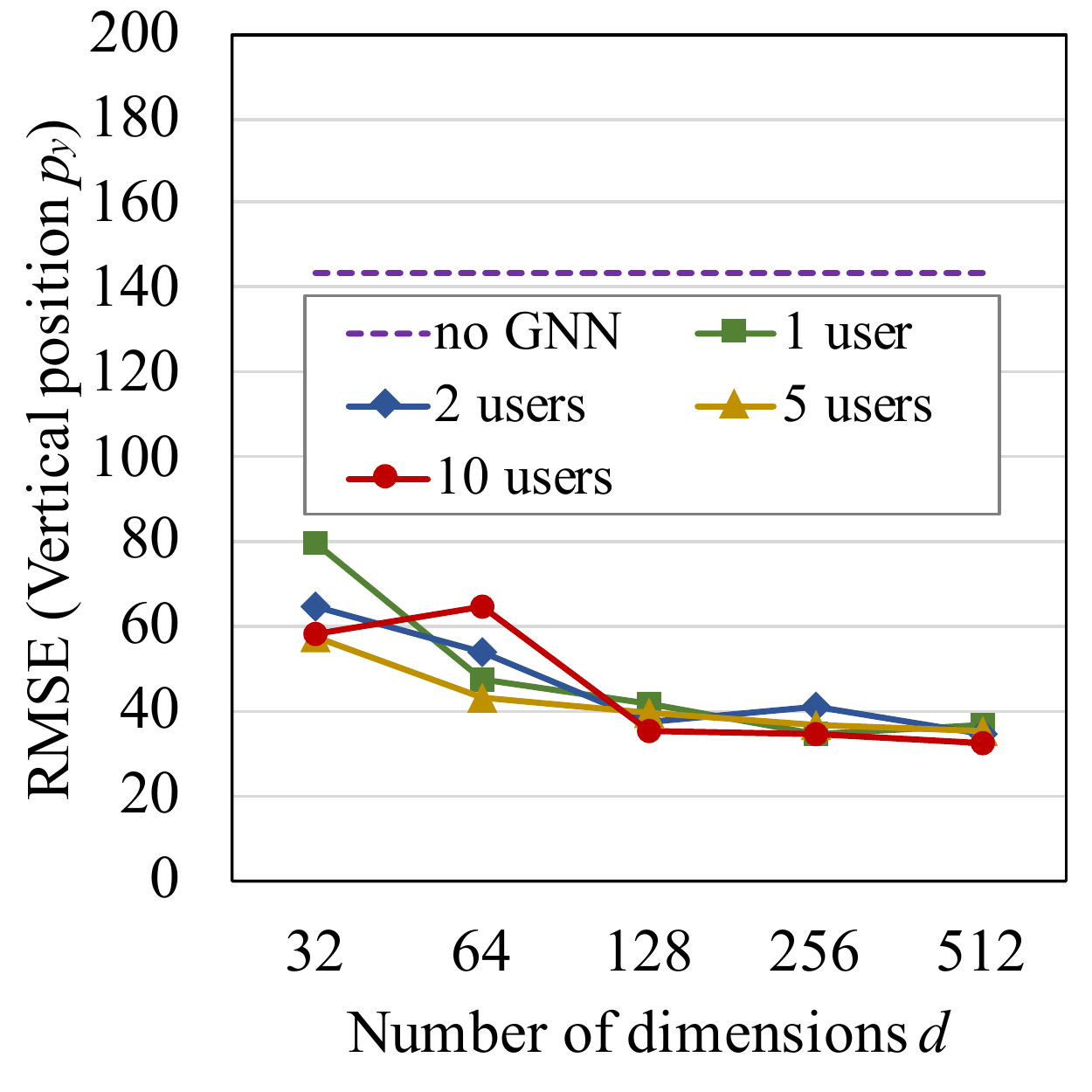}
    \vline
    \includegraphics[height=3.42cm]{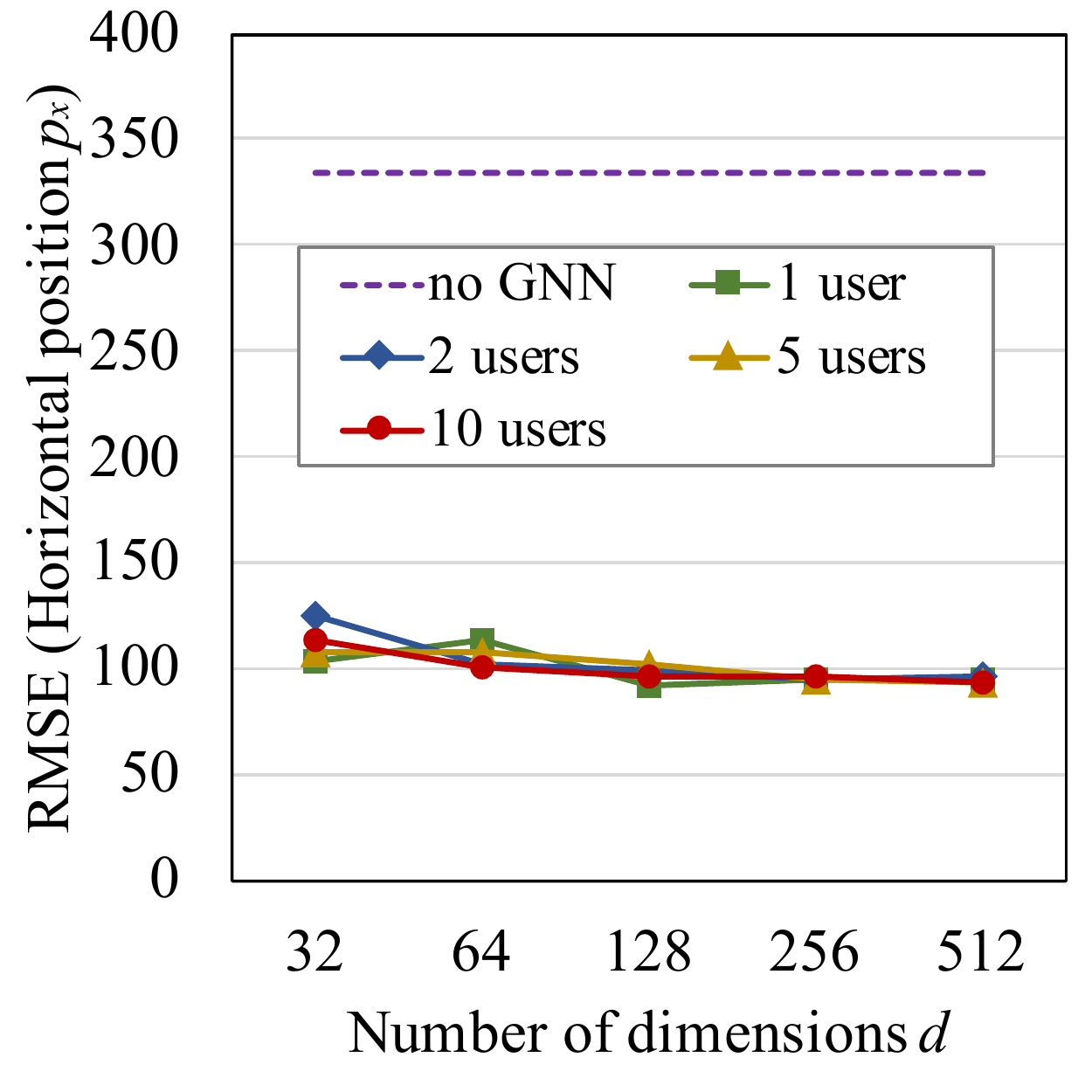}
    \includegraphics[height=3.42cm]{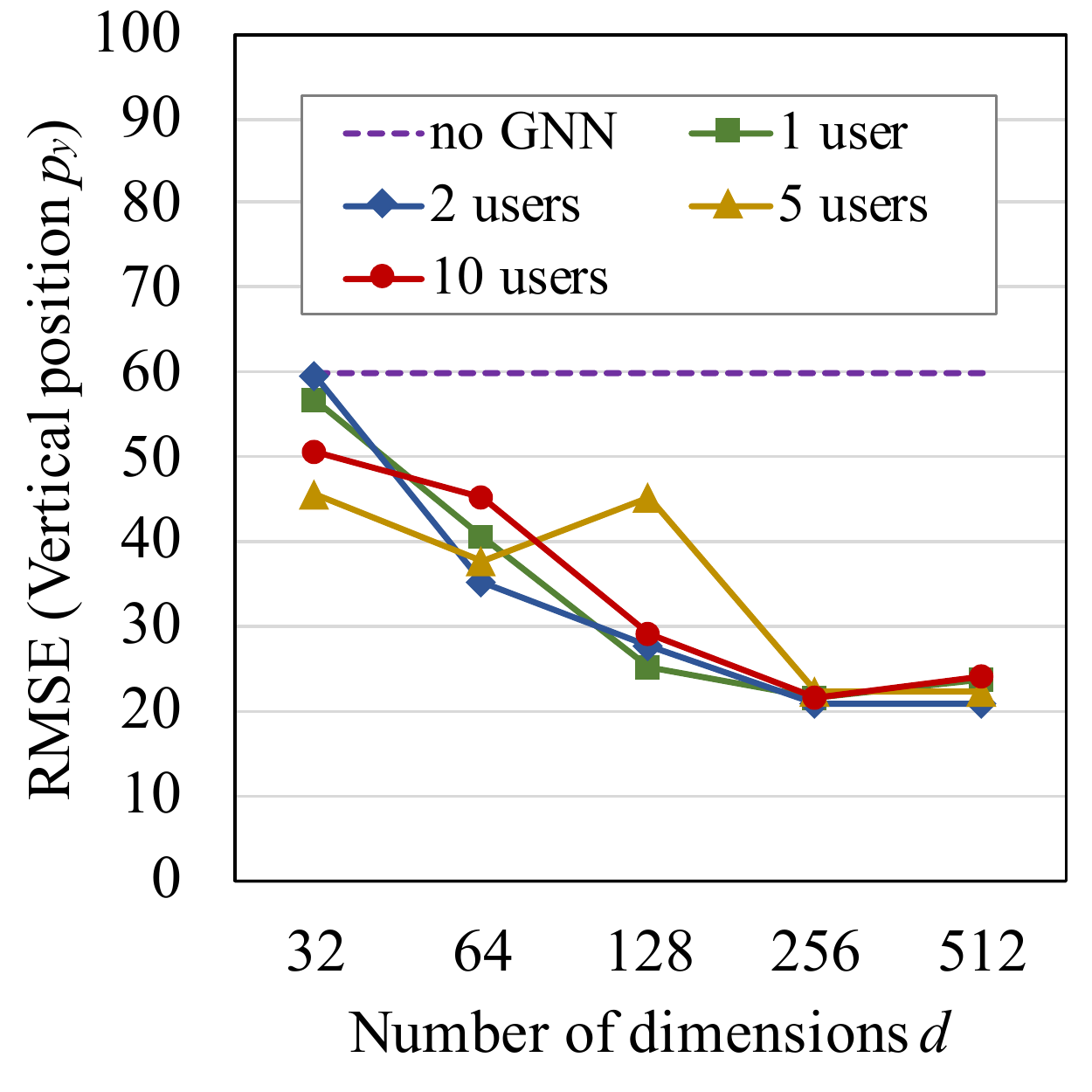} \\
    \includegraphics[height=3.42cm]{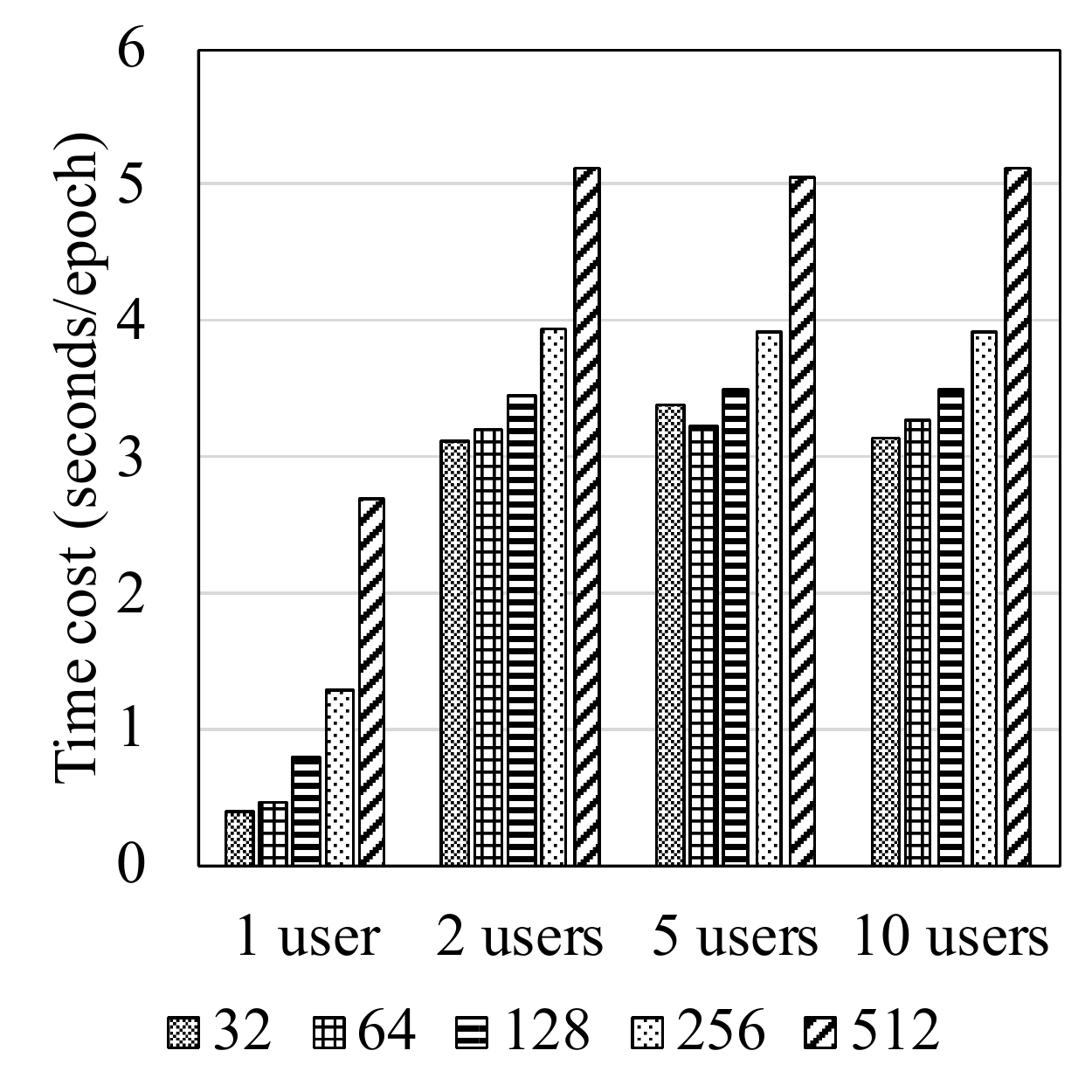}
    \includegraphics[width=3.42cm]{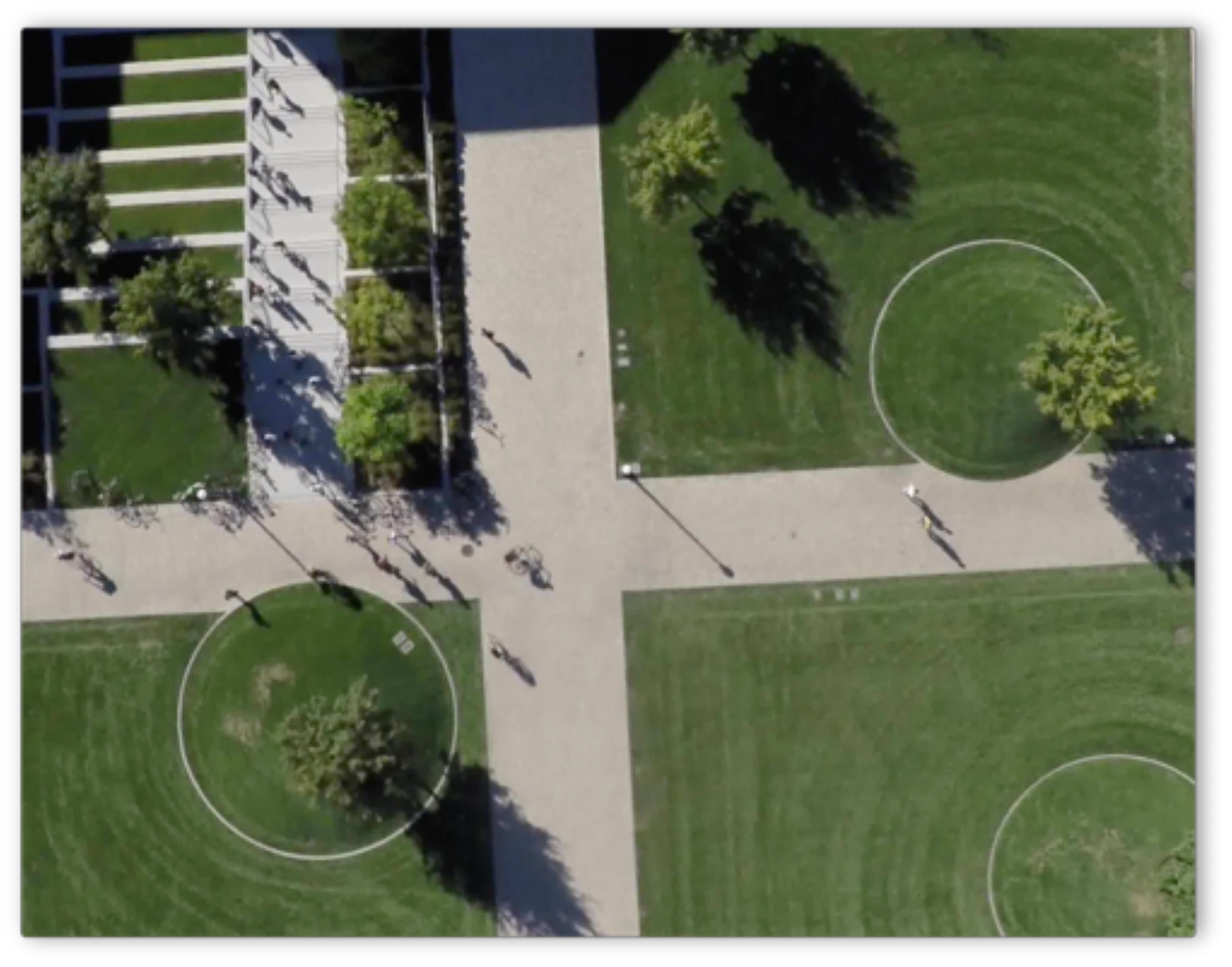}
    \vline
    \includegraphics[height=3.42cm]{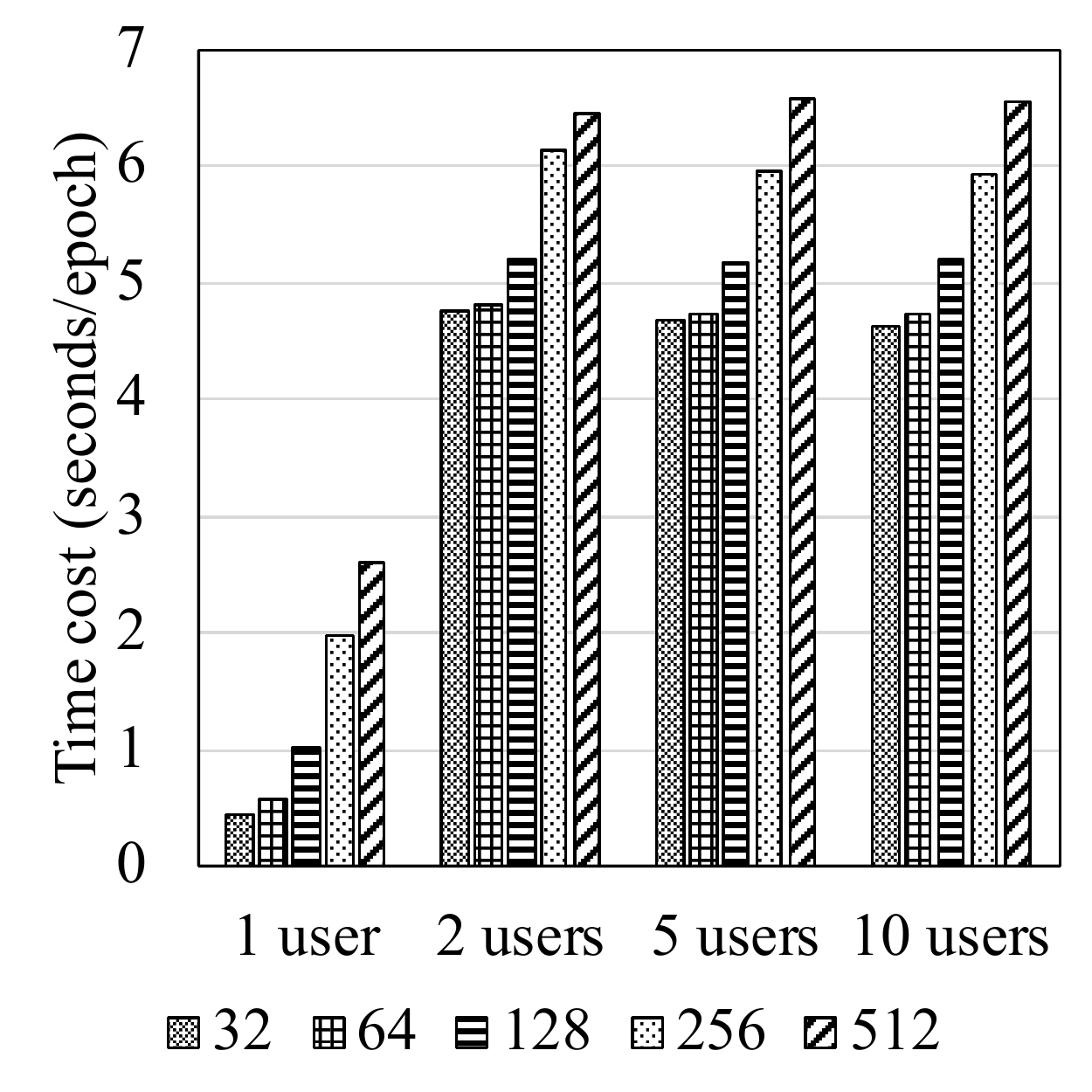}
    \includegraphics[width=3.42cm]{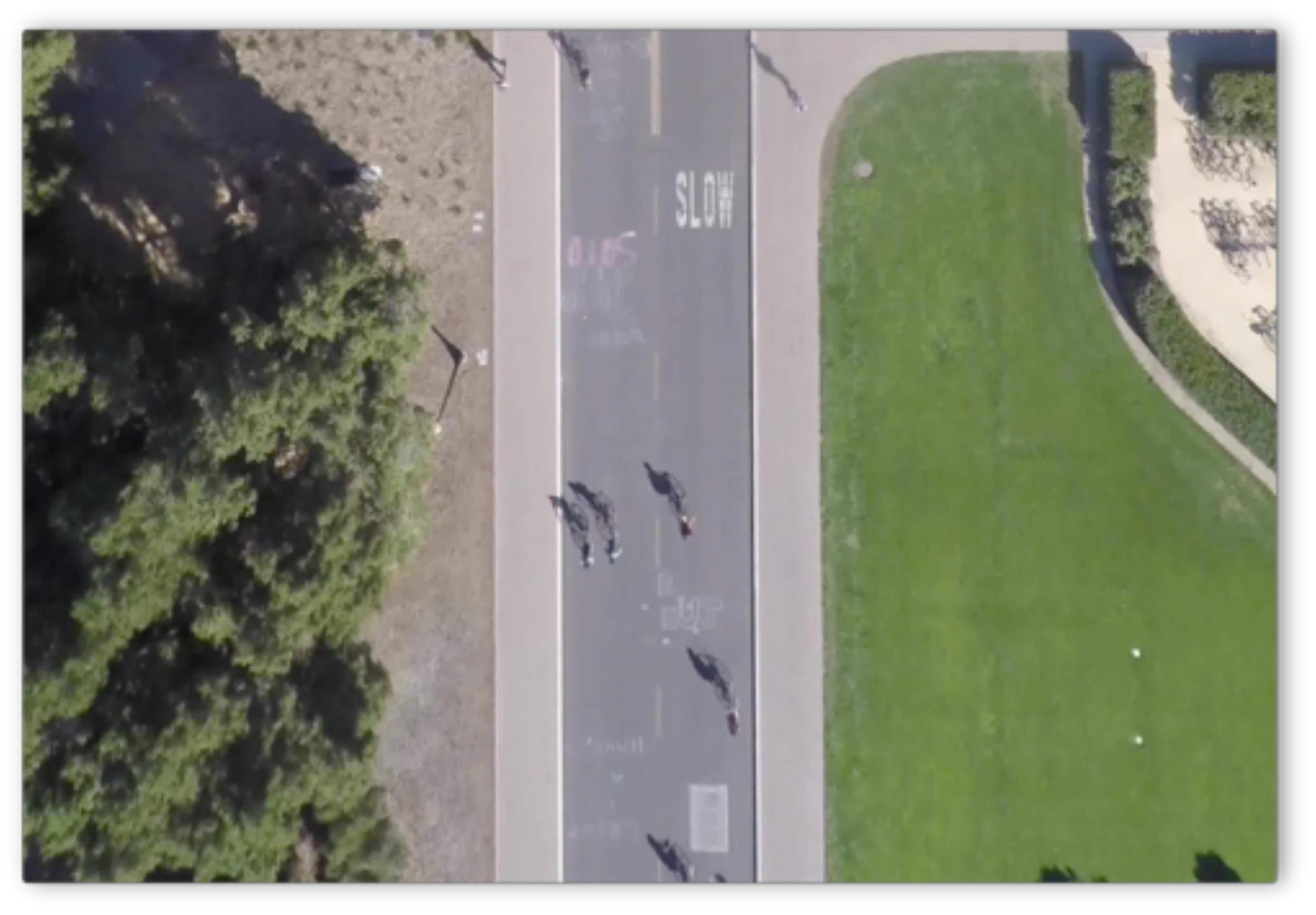} \\
    (c) Huang Y2E2 Buildings (hyang): 1340$\times$1730
    \quad \quad
    (d) Littlefield Center (little): 1322$\times$1945
    \caption{Results on predicting horizontal position $p_x$ (top left), predicting vertical position $p_y$ (top right), and running time (bottom left) as well as an example video frame (bottom right).}
    \label{fig:results}
    \vspace{-0.1in}
\end{figure}

\subsection{Results on efficiency}

From the bar charts in Figure~\ref{fig:results} we observe that when the number of user $m > 1$, which enables federated optimization, the time cost is higher than that of $m=1$. However, it does not become significantly higher as $m$ becomes bigger. Time cost under different $m$ values is comparable. The time cost of non-federated learning with the number of dimensions $d=512$ is close to that of federated learning with $d=32$. Feddy's time cost is $1.5\times$ to $3\times$ of that of non-federated GNN.

\subsection{Results on secure aggregation}

Masking at the user side and the aggregation at the parameter server side are computationally expensive, however the masking and the aggregation for all weights can be computed in parallel. Therefore, we implemented a parallel program for the secure aggregation algorithms, where masking and aggregation are parallelized, and performed a simulation to measure the computation costs. COTS smart cameras are equipped with quad-core processors (e.g., NEON-1040 by ADLINK), therefore we performed the simulation with 4 threads and measured the end-to-end elapsed time for each user. We repeated all simulation for 50 times and measured the average. All parameters are chosen such that we have 112-bit security as recommended by NIST \cite{barker2018transitioning} (the bidwith of $p_\nu$'s is 118 bits, the bitwidth of $N$ is 2048 bits, etc.).

\newcommand{\customwidth}{.245\textwidth}
\begin{figure}[t]\centering
    \includegraphics[width=\customwidth]{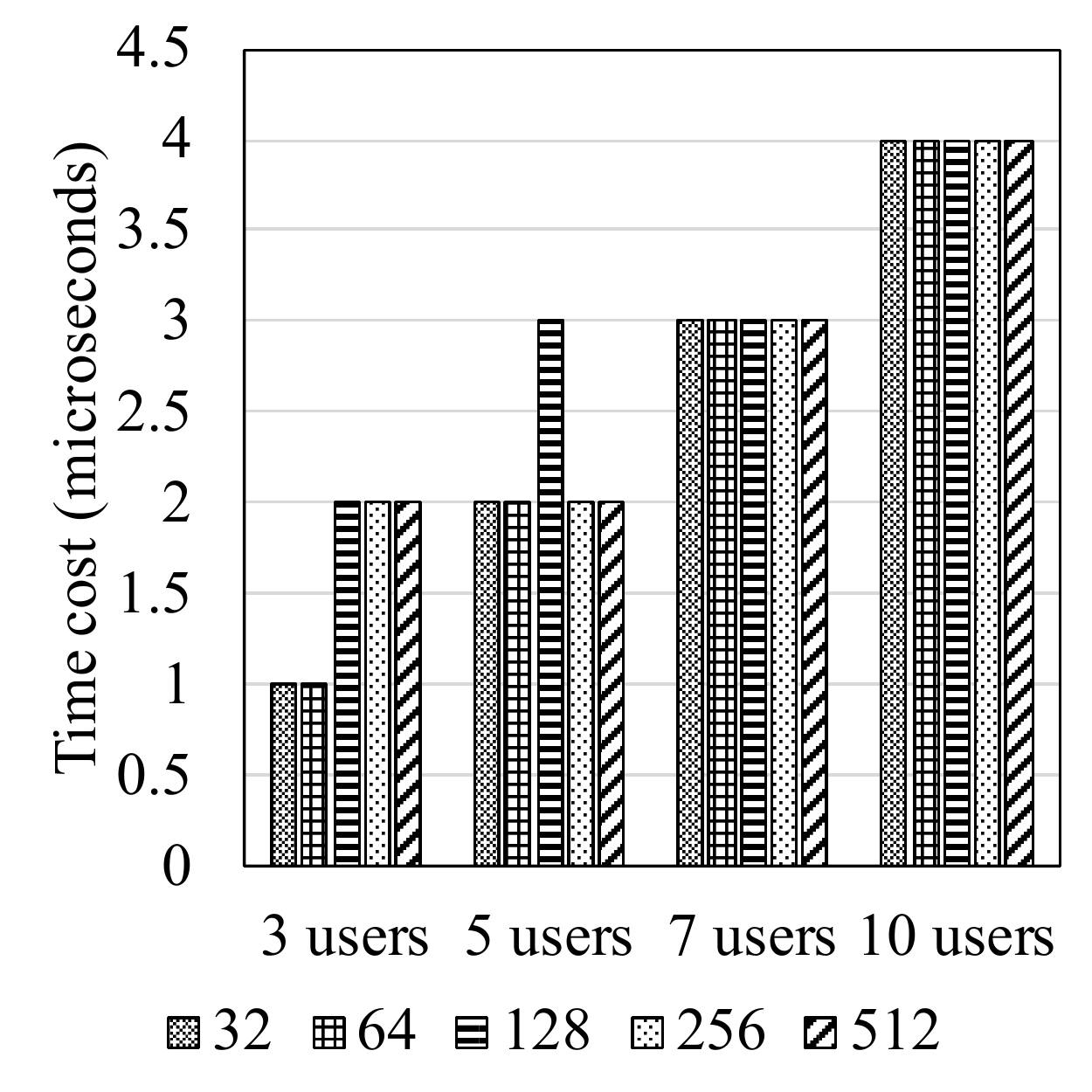}
    \includegraphics[width=\customwidth]{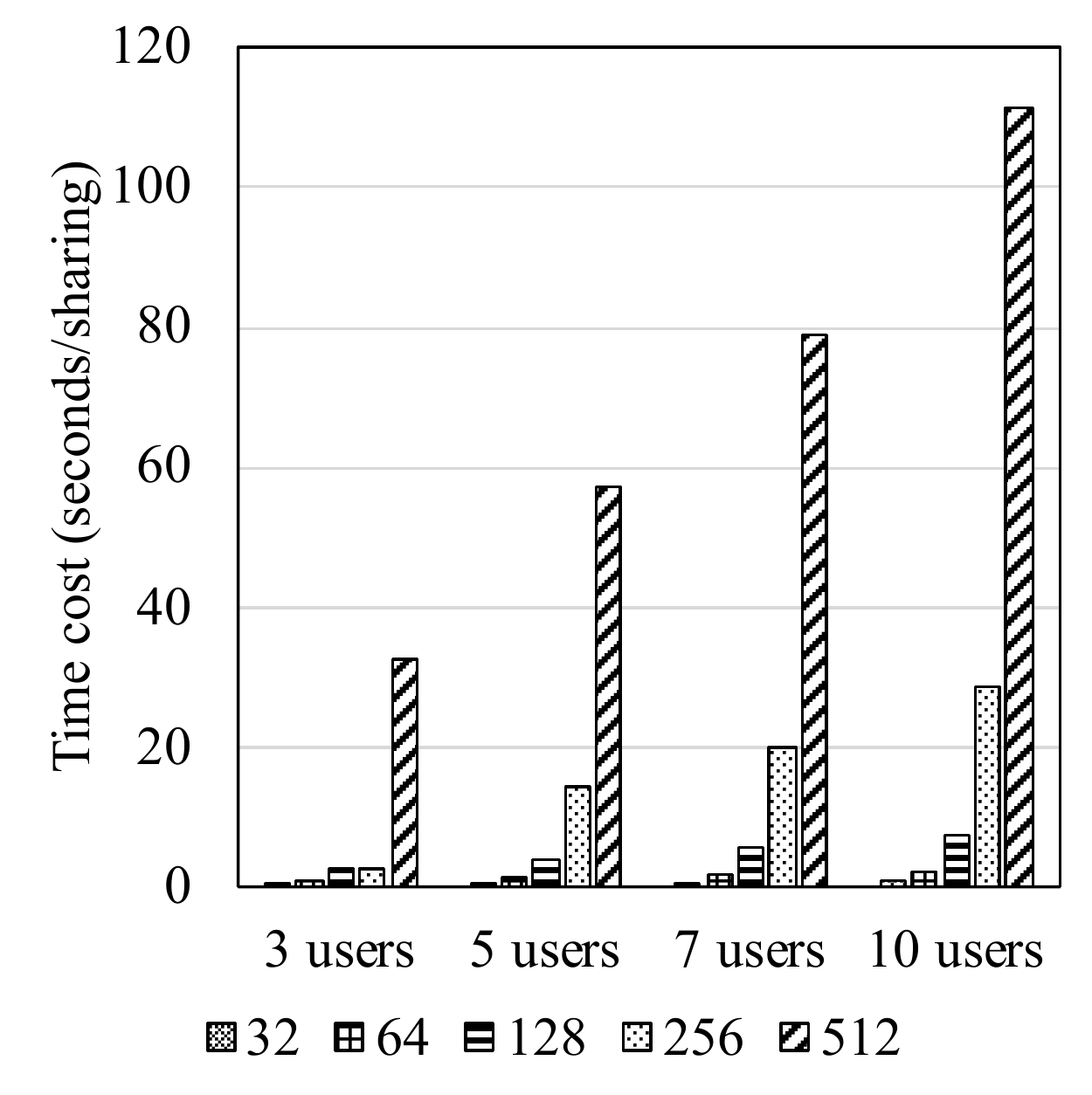}
    \includegraphics[width=\customwidth]{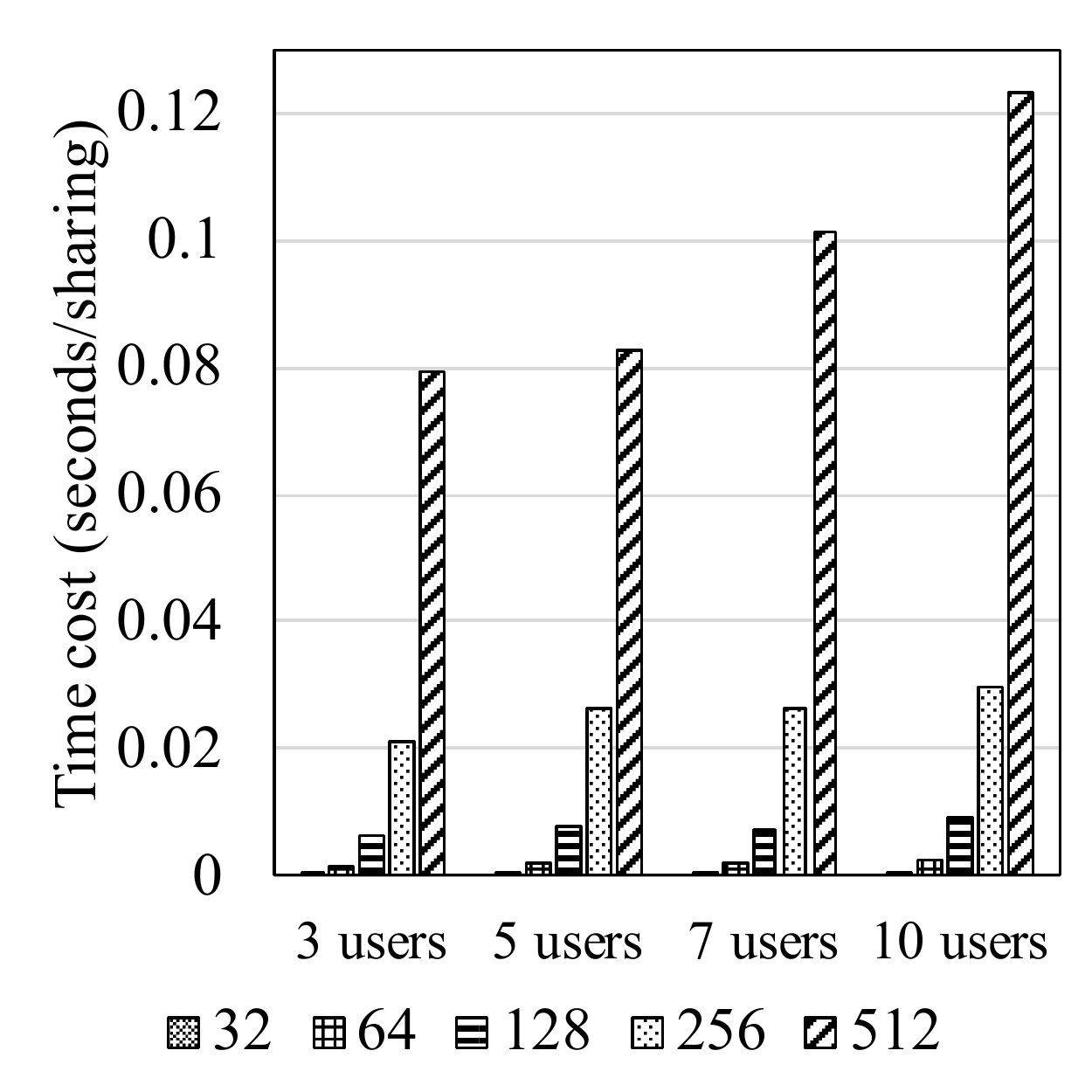}
    \includegraphics[width=\customwidth]{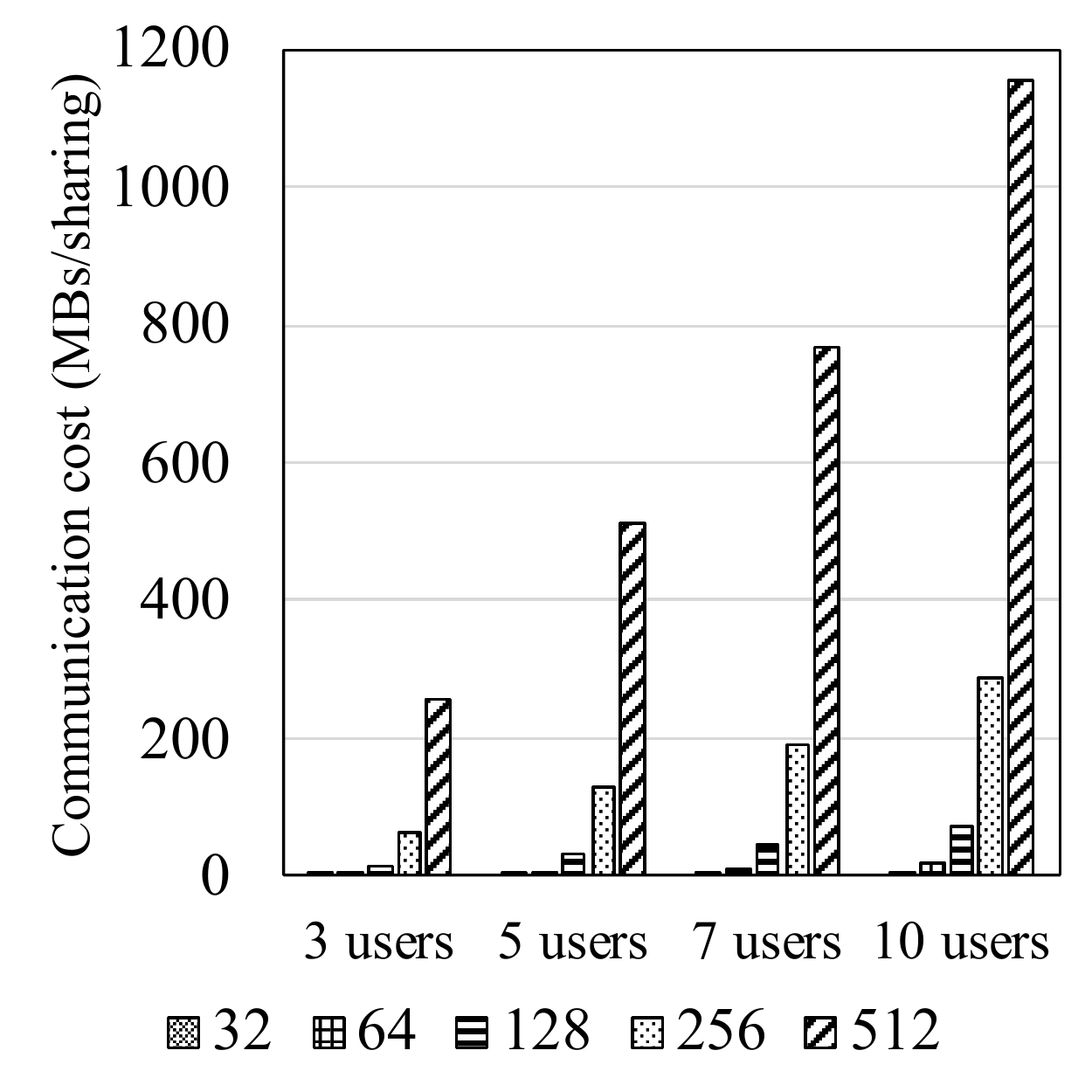}
    \\
         \quad\quad(a) Key Generation \quad\quad\quad
    (b) Masking  \quad\quad\quad\quad (c) Aggregation \quad\quad\quad (d) Communication\\
    \caption{Computation and communication costs of secure aggregation per user}
    \label{fig:results-aggregate}
    \vspace{-0.1in}
\end{figure}

The costs of key generation (Figure~\ref{fig:results-aggregate}(a)) is negligible. Although they grow linearly \textit{w.r.t.} the number of users, the key generation is a one-time process, therefore the overhead of the key generation is negligible. The masking (Figure~\ref{fig:results-aggregate}(b)), however, is not negligible. The costs grow linearly with the number of users and quadratically with the dimension. The size of keys grow linearly \textit{w.r.t.} the number of users (since $p_\nu=\sum_{\mu\in U}p_{\nu,\mu}$) up to the modulus, therefore the costs will not grow after the number of users reaches certain point. However, it is unavoidable to have a quadratic growth because the number of the parameters that need to be masked and shared are quadratic \textit{w.r.t.} the dimension. Still, the costs of masking are acceptable because the aggregation via Feddy occurs once every 10 epochs, and the cost of the masking is comparable to the total costs of the GNN training during those epochs. This is a tradeoff one needs to make to achieve a strong provable data security.
The aggregation performed by the parameter server (Figure \ref{fig:results-aggregate}(c)) is negligible, since the aggregation occurs once every 10 epochs. The costs grow linearly with the number of users, however reasonably strong servers can handle such computation. Finally, we present the total size of the messages one user needs to receive from all other users (Figure~\ref{fig:results-aggregate}(d)). There are non-negligible communication costs caused by the masking, and this is also the tradeoff for the sake of provable security.
\section{Related Work}

\textbf{Dynamic/evolutionary GNN.} GNN models have been developed to learn static graph data \cite{henaff2015deep,niepert2016learning,defferrard2016convolutional,hamilton2017representation,schlichtkrull2018modeling,seo2018structured}. Dynamic graph embeddings are expected to preserve specific structures such as triadic closure processes \cite{zhou2018dynamic}, attribute-value dynamics \cite{li2017attributed}, continuous-time patterns \cite{nguyen2018continuous}, interaction trajectories \cite{kumar2019predicting}, and out-of-sample nodes \cite{ma2018depthlgp}. Graph convolutional networks are equipped with self-attention mechanisms \cite{sankar2018dynamic}, Markov mechanisms \cite{qu2019gmnn}, or recurrent models \cite{pareja2020evolvegcn} for dynamic graph learning \cite{manessi2017dynamic}. We focus on modeling spatial and dynamic information in video graph sequences.

\textbf{Federated machine learning.} As clients have become more powerful and communication efficient, developing deep networks for decentralized data has attracted lots of research \cite{mcmahan2016communication,konevcny2016federated}. Federated optimization algorithms can distribute optimization beyond datacenters \cite{konevcny2015federated}. Bonawitz \emph{et al.} proposed a new system design to implement at scale \cite{bonawitz2019towards,zhu2019multi}. Yang \emph{et al.} presented concepts (forming a new ontology) and methods on employing federated learning in machine learning applications \cite{yang2019federated}.

\textbf{Secure aggregation on time-series data.} Bonawitz \emph{et al.} developed practical secure aggregation for federated learning on user-held data \cite{bonawitz2016practical} and for privacy-preserving machine learning \cite{bonawitz2017practical}. However, those methods cannot be directly applied for time-series or sequential data. Existing work on privacy-preserving aggregation of time-series data needs semantic analysis on dynamic user groups using state-of-the-art machine learning \cite{shi2011privacy,joye2013scalable,jung2016pda}.

\section{Conclusions}

We presented Federated Dynamic Graph Neural Network, a distributed and secure framework to learn the object representations from multi-user graph sequences. It aggregates both spatial and dynamic information and uses a self-supervised loss of predicting the trajectories of objects. It is trained in a federated learning manner. The centrally located server sends the model to user devices. Local models on the respective user devices learn and periodically send their learning to the central server without ever exposing the user's data to the server. We design secure aggregation primitives that protect the security and privacy in federated learning with scalability. Experiments on four real-world video camera datasets demonstrated that Feddy achieves great effectiveness and security.


\section*{Broader impacts}


This paper presents provably secure federated learning based on a novel secure aggregation scheme. Specifically, this paper addresses the data security issues in the federated learning for the GNN, which is a promising deep learning framework for time-series video datasets. Considering the sensitivity of the video data collected by surveillance cameras, the broader impacts of this paper lie in the enhanced data security which leads to enhanced cybersecurity and individual privacy. 

The increased computation and communication costs may negatively impact the broader impacts, however there is active research in applied cryptography for accelerating cryptographic primitives. Therefore, the negative impacts from the increased overhead will be mitigated over time.

\bibliographystyle{named}
\bibliography{ref,gnn,fl,taeho,gcn,evo}

\begin{thebibliography}{}

\bibitem[\protect\citeauthoryear{Abadi \bgroup \em et al.\egroup
  }{2016}]{abadi2016deep}
Martin Abadi, Andy Chu, Ian Goodfellow, H~Brendan McMahan, Ilya Mironov, Kunal
  Talwar, and Li~Zhang.
\newblock Deep learning with differential privacy.
\newblock In {\em Proceedings of the 2016 ACM SIGSAC Conference on Computer and
  Communications Security}, pages 308--318, 2016.

\bibitem[\protect\citeauthoryear{Ananth and
  Jain}{2015}]{ananth2015indistinguishability}
Prabhanjan Ananth and Abhishek Jain.
\newblock Indistinguishability obfuscation from compact functional encryption.
\newblock In {\em Annual Cryptology Conference}, pages 308--326. Springer,
  2015.

\bibitem[\protect\citeauthoryear{Barker and
  Roginsky}{2018}]{barker2018transitioning}
Elaine Barker and Allen Roginsky.
\newblock Transitioning the use of cryptographic algorithms and key lengths.
\newblock Technical report, National Institute of Standards and Technology,
  2018.

\bibitem[\protect\citeauthoryear{Bonawitz \bgroup \em et al.\egroup
  }{2016}]{bonawitz2016practical}
Keith Bonawitz, Vladimir Ivanov, Ben Kreuter, Antonio Marcedone, H~Brendan
  McMahan, Sarvar Patel, Daniel Ramage, Aaron Segal, and Karn Seth.
\newblock Practical secure aggregation for federated learning on user-held
  data.
\newblock In {\em NIPS Workshop on Private Multi-Party Machine Learning}, 2016.

\bibitem[\protect\citeauthoryear{Bonawitz \bgroup \em et al.\egroup
  }{2017}]{bonawitz2017practical}
Keith Bonawitz, Vladimir Ivanov, Ben Kreuter, Antonio Marcedone, H~Brendan
  McMahan, Sarvar Patel, Daniel Ramage, Aaron Segal, and Karn Seth.
\newblock Practical secure aggregation for privacy-preserving machine learning.
\newblock In {\em Proceedings of the 2017 ACM SIGSAC Conference on Computer and
  Communications Security}, pages 1175--1191, 2017.

\bibitem[\protect\citeauthoryear{Bonawitz \bgroup \em et al.\egroup
  }{2019}]{bonawitz2019towards}
Keith Bonawitz, Hubert Eichner, Wolfgang Grieskamp, Dzmitry Huba, Alex
  Ingerman, Vladimir Ivanov, Chloe Kiddon, Jakub Konecny, Stefano Mazzocchi,
  H~Brendan McMahan, et~al.
\newblock Towards federated learning at scale: System design.
\newblock {\em arXiv preprint arXiv:1902.01046}, 2019.

\bibitem[\protect\citeauthoryear{Boneh}{1998}]{boneh1998decision}
Dan Boneh.
\newblock The decision diffie-hellman problem.
\newblock In {\em International Algorithmic Number Theory Symposium}, pages
  48--63. Springer, 1998.

\bibitem[\protect\citeauthoryear{Boyle \bgroup \em et al.\egroup
  }{2017}]{boyle2017group}
Elette Boyle, Niv Gilboa, and Yuval Ishai.
\newblock Group-based secure computation: Optimizing rounds, communication, and
  computation.
\newblock In {\em Eurocrypt}, pages 163--193. Springer, 2017.

\bibitem[\protect\citeauthoryear{Bramberger \bgroup \em et al.\egroup
  }{2006}]{bramberger2006distributed}
Michael Bramberger, Andreas Doblander, Arnold Maier, Bernhard Rinner, and
  Helmut Schwabach.
\newblock Distributed embedded smart cameras for surveillance applications.
\newblock {\em Computer}, 39(2):68--75, 2006.

\bibitem[\protect\citeauthoryear{Chen \bgroup \em et al.\egroup
  }{2019}]{chen2019distributed}
Jianguo Chen, Kenli Li, Qingying Deng, Keqin Li, and S~Yu Philip.
\newblock Distributed deep learning model for intelligent video surveillance
  systems with edge computing.
\newblock {\em IEEE Transactions on Industrial Informatics}, 2019.

\bibitem[\protect\citeauthoryear{Coron}{2000}]{coron2000exact}
Jean-S{\'e}bastien Coron.
\newblock On the exact security of full domain hash.
\newblock In {\em Annual International Cryptology Conference}, pages 229--235.
  Springer, 2000.

\bibitem[\protect\citeauthoryear{Damg{\aa}rd and
  Mikkelsen}{2010}]{damgaard2010efficient}
Ivan Damg{\aa}rd and Gert~L{\ae}ss{\o}e Mikkelsen.
\newblock Efficient, robust and constant-round distributed rsa key generation.
\newblock In {\em Theory of Cryptography Conference}, pages 183--200. Springer,
  2010.

\bibitem[\protect\citeauthoryear{Defferrard \bgroup \em et al.\egroup
  }{2016}]{defferrard2016convolutional}
Micha{\"e}l Defferrard, Xavier Bresson, and Pierre Vandergheynst.
\newblock Convolutional neural networks on graphs with fast localized spectral
  filtering.
\newblock In {\em Advances in neural information processing systems}, pages
  3844--3852, 2016.

\bibitem[\protect\citeauthoryear{Fredrikson \bgroup \em et al.\egroup
  }{2015}]{fredrikson2015model}
Matt Fredrikson, Somesh Jha, and Thomas Ristenpart.
\newblock Model inversion attacks that exploit confidence information and basic
  countermeasures.
\newblock In {\em Proceedings of the 22nd ACM SIGSAC Conference on Computer and
  Communications Security}, pages 1322--1333, 2015.

\bibitem[\protect\citeauthoryear{Gordon \bgroup \em et al.\egroup
  }{2015}]{gordon2015constant}
S~Dov Gordon, Feng-Hao Liu, and Elaine Shi.
\newblock Constant-round mpc with fairness and guarantee of output delivery.
\newblock In {\em CRYPTO}, pages 63--82. Springer, 2015.

\bibitem[\protect\citeauthoryear{Hamilton \bgroup \em et al.\egroup
  }{2017a}]{hamilton2017inductive}
Will Hamilton, Zhitao Ying, and Jure Leskovec.
\newblock Inductive representation learning on large graphs.
\newblock In {\em Advances in Neural Information Processing Systems}, pages
  1024--1034, 2017.

\bibitem[\protect\citeauthoryear{Hamilton \bgroup \em et al.\egroup
  }{2017b}]{hamilton2017representation}
William~L Hamilton, Rex Ying, and Jure Leskovec.
\newblock Representation learning on graphs: Methods and applications.
\newblock {\em arXiv preprint arXiv:1709.05584}, 2017.

\bibitem[\protect\citeauthoryear{Henaff \bgroup \em et al.\egroup
  }{2015}]{henaff2015deep}
Mikael Henaff, Joan Bruna, and Yann LeCun.
\newblock Deep convolutional networks on graph-structured data.
\newblock {\em arXiv preprint arXiv:1506.05163}, 2015.

\bibitem[\protect\citeauthoryear{Joye and Libert}{2013}]{joye2013scalable}
Marc Joye and Beno{\^\i}t Libert.
\newblock A scalable scheme for privacy-preserving aggregation of time-series
  data.
\newblock In {\em International Conference on Financial Cryptography and Data
  Security}, pages 111--125. Springer, 2013.

\bibitem[\protect\citeauthoryear{Jung \bgroup \em et al.\egroup
  }{2016}]{jung2016pda}
Taeho Jung, Junze Han, and Xiang-Yang Li.
\newblock Pda: semantically secure time-series data analytics with dynamic user
  groups.
\newblock {\em IEEE Transactions on Dependable and Secure Computing},
  15(2):260--274, 2016.

\bibitem[\protect\citeauthoryear{Kone{\v{c}}n{\`y} \bgroup \em et al.\egroup
  }{2015}]{konevcny2015federated}
Jakub Kone{\v{c}}n{\`y}, Brendan McMahan, and Daniel Ramage.
\newblock Federated optimization: Distributed optimization beyond the
  datacenter.
\newblock {\em arXiv preprint arXiv:1511.03575}, 2015.

\bibitem[\protect\citeauthoryear{Kone{\v{c}}n{\`y} \bgroup \em et al.\egroup
  }{2016}]{konevcny2016federated}
Jakub Kone{\v{c}}n{\`y}, H~Brendan McMahan, Felix~X Yu, Peter Richt{\'a}rik,
  Ananda~Theertha Suresh, and Dave Bacon.
\newblock Federated learning: Strategies for improving communication
  efficiency.
\newblock {\em arXiv preprint arXiv:1610.05492}, 2016.

\bibitem[\protect\citeauthoryear{Kumar \bgroup \em et al.\egroup
  }{2019}]{kumar2019predicting}
Srijan Kumar, Xikun Zhang, and Jure Leskovec.
\newblock Predicting dynamic embedding trajectory in temporal interaction
  networks.
\newblock In {\em ACM SIGKDD International Conference on Knowledge Discovery \&
  Data Mining}, pages 1269--1278. ACM, 2019.

\bibitem[\protect\citeauthoryear{Li \bgroup \em et al.\egroup
  }{2017a}]{li2017attributed}
Jundong Li, Harsh Dani, Xia Hu, Jiliang Tang, Yi~Chang, and Huan Liu.
\newblock Attributed network embedding for learning in a dynamic environment.
\newblock In {\em Proceedings of the 2017 ACM on Conference on Information and
  Knowledge Management}, pages 387--396. ACM, 2017.

\bibitem[\protect\citeauthoryear{Li \bgroup \em et al.\egroup
  }{2017b}]{li2017situation}
Ruiyu Li, Makarand Tapaswi, Renjie Liao, Jiaya Jia, Raquel Urtasun, and Sanja
  Fidler.
\newblock Situation recognition with graph neural networks.
\newblock In {\em Proceedings of the IEEE International Conference on Computer
  Vision}, pages 4173--4182, 2017.

\bibitem[\protect\citeauthoryear{Li \bgroup \em et al.\egroup
  }{2017c}]{li2017diffusion}
Yaguang Li, Rose Yu, Cyrus Shahabi, and Yan Liu.
\newblock Diffusion convolutional recurrent neural network: Data-driven traffic
  forecasting.
\newblock {\em arXiv preprint arXiv:1707.01926}, 2017.

\bibitem[\protect\citeauthoryear{Lindell}{2017}]{lindell2017simulate}
Yehuda Lindell.
\newblock How to simulate it--a tutorial on the simulation proof technique.
\newblock In {\em Tutorials on the Foundations of Cryptography}, pages
  277--346. Springer, 2017.

\bibitem[\protect\citeauthoryear{Ma \bgroup \em et al.\egroup
  }{2018}]{ma2018depthlgp}
Jianxin Ma, Peng Cui, and Wenwu Zhu.
\newblock Depthlgp: learning embeddings of out-of-sample nodes in dynamic
  networks.
\newblock In {\em Thirty-Second AAAI Conference on Artificial Intelligence},
  2018.

\bibitem[\protect\citeauthoryear{Manessi \bgroup \em et al.\egroup
  }{2017}]{manessi2017dynamic}
Franco Manessi, Alessandro Rozza, and Mario Manzo.
\newblock Dynamic graph convolutional networks.
\newblock {\em arXiv preprint arXiv:1704.06199}, 2017.

\bibitem[\protect\citeauthoryear{McMahan \bgroup \em et al.\egroup
  }{2016}]{mcmahan2016communication}
H~Brendan McMahan, Eider Moore, Daniel Ramage, Seth Hampson, et~al.
\newblock Communication-efficient learning of deep networks from decentralized
  data.
\newblock {\em arXiv preprint arXiv:1602.05629}, 2016.

\bibitem[\protect\citeauthoryear{Melis \bgroup \em et al.\egroup
  }{2019}]{melis2019exploiting}
Luca Melis, Congzheng Song, Emiliano De~Cristofaro, and Vitaly Shmatikov.
\newblock Exploiting unintended feature leakage in collaborative learning.
\newblock In {\em 2019 IEEE Symposium on Security and Privacy (SP)}, pages
  691--706. IEEE, 2019.

\bibitem[\protect\citeauthoryear{Nasr \bgroup \em et al.\egroup
  }{2019}]{nasr2019comprehensive}
Milad Nasr, Reza Shokri, and Amir Houmansadr.
\newblock Comprehensive privacy analysis of deep learning: Passive and active
  white-box inference attacks against centralized and federated learning.
\newblock In {\em 2019 IEEE Symposium on Security and Privacy (SP)}, pages
  739--753. IEEE, 2019.

\bibitem[\protect\citeauthoryear{Nguyen \bgroup \em et al.\egroup
  }{2018}]{nguyen2018continuous}
Giang~Hoang Nguyen, John~Boaz Lee, Ryan~A Rossi, Nesreen~K Ahmed, Eunyee Koh,
  and Sungchul Kim.
\newblock Continuous-time dynamic network embeddings.
\newblock In {\em Companion of the The Web Conference 2018 on The Web
  Conference 2018}, pages 969--976. International World Wide Web Conferences
  Steering Committee, 2018.

\bibitem[\protect\citeauthoryear{Niepert \bgroup \em et al.\egroup
  }{2016}]{niepert2016learning}
Mathias Niepert, Mohamed Ahmed, and Konstantin Kutzkov.
\newblock Learning convolutional neural networks for graphs.
\newblock In {\em International conference on machine learning}, pages
  2014--2023, 2016.

\bibitem[\protect\citeauthoryear{Pareja \bgroup \em et al.\egroup
  }{2020}]{pareja2020evolvegcn}
Aldo Pareja, Giacomo Domeniconi, Jie Chen, Tengfei Ma, Toyotaro Suzumura,
  Hiroki Kanezashi, Tim Kaler, and Charles~E Leisersen.
\newblock Evolvegcn: Evolving graph convolutional networks for dynamic graphs.
\newblock In {\em Proceedings of the AAAI Conference on Artificial
  Intelligence}, 2020.

\bibitem[\protect\citeauthoryear{Qu \bgroup \em et al.\egroup
  }{2019}]{qu2019gmnn}
Meng Qu, Yoshua Bengio, and Jian Tang.
\newblock Gmnn: Graph markov neural networks.
\newblock In {\em Proceedings of the 36th International Conference on Machine
  Learning}, 2019.

\bibitem[\protect\citeauthoryear{Robicquet \bgroup \em et al.\egroup
  }{2016}]{robicquet2016learning}
Alexandre Robicquet, Amir Sadeghian, Alexandre Alahi, and Silvio Savarese.
\newblock Learning social etiquette: Human trajectory understanding in crowded
  scenes.
\newblock In {\em European conference on computer vision}, pages 549--565.
  Springer, 2016.

\bibitem[\protect\citeauthoryear{Sankar \bgroup \em et al.\egroup
  }{2018}]{sankar2018dynamic}
Aravind Sankar, Yanhong Wu, Liang Gou, Wei Zhang, and Hao Yang.
\newblock Dynamic graph representation learning via self-attention networks.
\newblock {\em arXiv preprint arXiv:1812.09430}, 2018.

\bibitem[\protect\citeauthoryear{Schlichtkrull \bgroup \em et al.\egroup
  }{2018}]{schlichtkrull2018modeling}
Michael Schlichtkrull, Thomas~N Kipf, Peter Bloem, Rianne Van Den~Berg, Ivan
  Titov, and Max Welling.
\newblock Modeling relational data with graph convolutional networks.
\newblock In {\em European Semantic Web Conference}, pages 593--607. Springer,
  2018.

\bibitem[\protect\citeauthoryear{Seo \bgroup \em et al.\egroup
  }{2018}]{seo2018structured}
Youngjoo Seo, Micha{\"e}l Defferrard, Pierre Vandergheynst, and Xavier Bresson.
\newblock Structured sequence modeling with graph convolutional recurrent
  networks.
\newblock In {\em International Conference on Neural Information Processing},
  pages 362--373. Springer, 2018.

\bibitem[\protect\citeauthoryear{Shi \bgroup \em et al.\egroup
  }{2011}]{shi2011privacy}
Elaine Shi, TH~Hubert Chan, Eleanor Rieffel, Richard Chow, and Dawn Song.
\newblock Privacy-preserving aggregation of time-series data.
\newblock In {\em Proc. NDSS}, volume~2, pages 1--17. Citeseer, 2011.

\bibitem[\protect\citeauthoryear{Truex \bgroup \em et al.\egroup
  }{2019}]{truex2019demystifying}
Stacey Truex, Ling Liu, Mehmet~Emre Gursoy, Lei Yu, and Wenqi Wei.
\newblock Demystifying membership inference attacks in machine learning as a
  service.
\newblock {\em IEEE Transactions on Services Computing}, 2019.

\bibitem[\protect\citeauthoryear{Valera and
  Velastin}{2005}]{valera2005intelligent}
Maria Valera and Sergio~A Velastin.
\newblock Intelligent distributed surveillance systems: a review.
\newblock {\em IEE Proceedings-Vision, Image and Signal Processing},
  152(2):192--204, 2005.

\bibitem[\protect\citeauthoryear{Yang \bgroup \em et al.\egroup
  }{2019}]{yang2019federated}
Qiang Yang, Yang Liu, Tianjian Chen, and Yongxin Tong.
\newblock Federated machine learning: Concept and applications.
\newblock {\em ACM Transactions on Intelligent Systems and Technology (TIST)},
  10(2):1--19, 2019.

\bibitem[\protect\citeauthoryear{Yazdi and Bouwmans}{2018}]{yazdi2018new}
Mehran Yazdi and Thierry Bouwmans.
\newblock New trends on moving object detection in video images captured by a
  moving camera: A survey.
\newblock {\em Computer Science Review}, 28:157--177, 2018.

\bibitem[\protect\citeauthoryear{Yu \bgroup \em et al.\egroup
  }{2017}]{yu2017spatio}
Bing Yu, Haoteng Yin, and Zhanxing Zhu.
\newblock Spatio-temporal graph convolutional networks: A deep learning
  framework for traffic forecasting.
\newblock {\em arXiv preprint arXiv:1709.04875}, 2017.

\bibitem[\protect\citeauthoryear{Zhou \bgroup \em et al.\egroup
  }{2018}]{zhou2018dynamic}
Lekui Zhou, Yang Yang, Xiang Ren, Fei Wu, and Yueting Zhuang.
\newblock Dynamic network embedding by modeling triadic closure process.
\newblock In {\em Thirty-Second AAAI Conference on Artificial Intelligence},
  2018.

\bibitem[\protect\citeauthoryear{Zhu and Jin}{2019}]{zhu2019multi}
Hangyu Zhu and Yaochu Jin.
\newblock Multi-objective evolutionary federated learning.
\newblock {\em IEEE transactions on neural networks and learning systems},
  2019.

\end{thebibliography}

\appendix

\section{Correctness of Feddy}

We rigorously prove that the secure aggregation employed in Feddy correctly let parameter server compute the aggregated values.

\begin{theorem}
Suppose $\sum_{\nu\in U}p_\nu=0\pmod{N}$ and $H$ is defined as $H:T\rightarrow \mathbb{G}_H$, where $\mathbb{G}_H$ is a multiplicative cyclic group of order $N$ with multiplication modulo $N^2$ being the multiplicative group operator. Then, we have:
\begin{displaymath}
    \frac{\big((\prod_{\nu\in U}y_\nu)-1\big)\mod N^2}{N}\mod N=\sum_{\nu\in U}x_\nu.
\end{displaymath}
\end{theorem}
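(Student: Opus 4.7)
The plan is to directly expand the product $\prod_{\nu\in U} y_\nu$ modulo $N^2$ and then peel off the two factors that make up each masked value $y_\nu = (1+N)^{x_\nu} H(t)^{p_\nu}$. First I would use the fact that multiplication in $\mathbb{Z}_{N^2}$ is commutative to rewrite
\begin{equation*}
\prod_{\nu\in U} y_\nu \;=\; \Bigl(\prod_{\nu\in U}(1+N)^{x_\nu}\Bigr)\cdot \Bigl(\prod_{\nu\in U} H(t)^{p_\nu}\Bigr) \;=\; (1+N)^{\sum_{\nu\in U}x_\nu}\cdot H(t)^{\sum_{\nu\in U}p_\nu}\pmod{N^2}.
\end{equation*}

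The key step is to kill the mask factor. Since $H(t)\in \mathbb{G}_H$ and $\mathbb{G}_H$ is cyclic of order $N$, the order of $H(t)$ divides $N$; combined with the hypothesis $\sum_{\nu\in U}p_\nu \equiv 0\pmod{N}$, this forces $H(t)^{\sum_\nu p_\nu} = 1 \pmod{N^2}$. After this cancellation we are left with $(1+N)^{\sum_\nu x_\nu} \bmod N^2$. Here I would invoke the binomial expansion $(1+N)^k = \sum_{i=0}^{k}\binom{k}{i} N^i$ and note that every term with $i\ge 2$ is a multiple of $N^2$ and hence vanishes modulo $N^2$, leaving $(1+N)^k \equiv 1 + kN \pmod{N^2}$ as already noted in the paper.

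Substituting, we get $\prod_{\nu\in U} y_\nu \equiv 1 + N\cdot\sum_{\nu\in U}x_\nu \pmod{N^2}$. Subtracting $1$, taking the representative in $\{0,\ldots,N^2-1\}$, dividing by $N$ (which is exact over the integers once we have taken the mod-$N^2$ representative), and finally reducing modulo $N$ yields $\sum_{\nu\in U}x_\nu \bmod N$, which matches the claim.

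The main obstacle I anticipate is the mask-cancellation step: one has to be careful that $H(t)$ really lies in the order-$N$ subgroup rather than in the full $\mathbb{Z}_{N^2}^*$, because otherwise $\sum p_\nu \equiv 0 \pmod N$ would not suffice (one would need $\sum p_\nu \equiv 0 \pmod{\mathrm{ord}(H(t))}$, and the natural order in $\mathbb{Z}_{N^2}^*$ is $N\varphi(N)$). The hypothesis $H:T\to \mathbb{G}_H$ is exactly what makes this step go through, so I would state this explicitly and cite it as the reason the scheme is designed to sample pads summing to $0 \bmod N$ rather than some larger modulus. A minor secondary caveat is that the final answer is only $\sum_\nu x_\nu$ modulo $N$, so correctness of the decoded sum implicitly assumes the fixed-point-encoded sum stays below $N$; this is easily arranged by choosing $N$ large enough and is the standard range assumption for Paillier-style aggregation.
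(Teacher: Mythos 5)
Your proposal is correct and follows essentially the same route as the paper's own proof: expand the product, cancel $H(t)^{\sum_\nu p_\nu}$ using the order-$N$ group hypothesis together with $\sum_\nu p_\nu\equiv 0\pmod N$, apply the binomial identity $(1+N)^k\equiv 1+kN\pmod{N^2}$, and then subtract, divide by $N$, and reduce modulo $N$. Your added remarks — that the hypothesis $H:T\rightarrow\mathbb{G}_H$ is precisely what makes the cancellation work, and that the decoded sum is only recovered modulo $N$ so one needs $\sum_\nu x_\nu < N$ — match the caveats the paper itself notes after its derivation.
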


Note that the $H$ above can be easily constructed as $H(t):=(r^{\varphi(N)})^t$ if $r^{\varphi(N)}$ is included in the system-wide parameters. This can be done securely either by employing a crypto server who generates the system-wide parameters and destroys all secret values \cite{jung2016pda}, or by employing a secure multi-party computation protocol that generates RSA keys \cite{damgaard2010efficient}. We adopt the former approach in this paper. 

\begin{proof} We start by simplifying $\frac{((\prod_{\nu\in U}y_\nu)-1)\mod N^2}{N}$.
\begin{displaymath}\begin{split}
    \frac{\big((\prod_{\nu\in U}y_\nu)-1\big)\mod N^2}{N}&= \frac{\Big(\big(\prod_{\nu\in U}(1+N)^{x_\nu}H(t)^{p_\nu}\big)-1\Big)\mod N^2}{N}\\
    &= \frac{\Big((1+N)^{\sum_{\nu\in U} x_\nu}H(t)^{\sum_{\nu\in U}p_\nu}-1\Big)\mod N^2}{N}\\
    &~~~~ \text{(due to the binomial theorem)}
    \\
    &= \frac{\Big((1+N\sum_{\nu\in U} x_\nu)H(t)^{\sum_{\nu\in U}p_\nu}-1\Big)\mod N^2}{N}\\
    &~~~~ \text{(because the order of the group $\mathbb{G}_H$ is $N$)}\\
    &= \frac{\Big((1+N\sum_{\nu\in U} x_\nu)-1\Big)\mod N^2}{N}\\
    &=\frac{\Big((1+N\sum_{\nu\in U} x_\nu)-1\Big)\mod N^2}{N}\\
    &=\frac{N\sum_{\nu\in U}x_\nu \mod N^2}{N}\\
    &=\frac{(N\sum_{\nu\in U}x_\nu) - kN^2}{N}~~\text{for some integer $k$}\\
    &=(\sum_{\nu\in U} x_\nu) -kN ~~\text{for some integer $k$}\\
\end{split}
\end{displaymath}
Then, it follows that $\frac{((\prod_{\nu\in U}y_\nu)-1)\mod N^2}{N}\mod N=\sum_{\nu\in U}x_\nu$. Note that $N$ is much larger than $x_\nu$'s, so $\sum_{\nu\in U}x_\nu$ will not likely exceed the modulus $N$. For example, in our implementation, $N$'s bitwidth is 2048 bits (i.e., as large as $2^{2048}$) while $x_\nu$s' bitwidths are less than 100 bits. In such a parameter setting, we can add at least $2^{1948}$ $x_\nu$'s without exceeding the modulus $N$. Therefore, in practical settings where we are dealing with gradients of the neural networks, we do not need to worry about the result being incorrect due to the modulus overflow issues.
\end{proof}

Then, since $\sum_{\nu\in U}x_\nu$ can be calculated correctly with Feddy, the parameter server can map the integer sum to the real-number sum, which yields the aggregated gradients. Recall that the approximation error caused by integer-real approximation with the fixed-point representation is bounded above by $2^{-(e+1)|U|}$ as described in Section \ref{sec:secure-aggregation}.

\section{Security of Feddy}
Here we present a privacy analysis of the framework inspired by \cite{jung2016pda}.  In general this uses standard techniques for proving indistinguishability (IND-CPA) \cite{joye2013scalable, ananth2015indistinguishability}.

\subsection{Adversary Model}

Note that due to privacy concerns, the result of the analytical computation should  only be given to the aggregator (parameter server), and any user’s data should be kept secret from anyone else but the owner unless it is deducible from the aggregated value. Additionally, both users and the aggregator are assumed to be \textit{semi-honest adaptive} adversaries. Informally, these adversaries will follow the protocol specifications correctly, but they may perform extra computation to try to infer others' private values (i.e., semi-honest), and the computation they perform can be based on their historical observation (i.e., adaptive). For our scenario, if users tamper with the protocol (i.e., not following the protocol specifications correctly), it is highly likely that the aggregator will detect it since the outcome of the protocol will not be in a valid range due to the cryptographic operations on large integers.  However, the aggregator is interested in recovering the correct  result, so they will not be motivated to attempt to maliciously tamper with the protocol. Note that users may report a value with small deviation such that the analytic result still appears reasonable for many reasons (by mistake etc.), but evaluating the reliability of the reported value is beyond the scope of this paper. Also note that adversaries are adaptive in the sense that they may produce their public values adaptively after seeing others’ public values.  We assume all the communication channels are open to anyone (i.e. as a result anyone can overhear/synthesize any message).

\subsection{Security Definition}

To formally define the security of the framework, we present a precise definition of Feddy:

\begin{definition}\label{FeddyDef}
Our Federated Dynamic Graph Neural Network 
($Feddy$) is the collection of the following four polynomial time algorithms: $Setup$, $KeyGen$, $Mask$, and $Aggregate$.

$Setup(1^{\kappa}) \longrightarrow params$ is a probabilistic setup algorithm that is run by a \textit{crypto server} (which is different from the parameter server) to generate system-wide public parameters that define the integer groups/rings the protocols will be operated on, denoted $params$ given a security parameter $\kappa$ as input.

$KeyGen(params) \longrightarrow \{ EK_\nu \}_\nu$ is a probabilistic and distributed algorithm jointly run by the users. Each user $\nu\in U$
will secretly receive his own secret key $EK_\nu$.

$Mask(x_{\nu},EK_\nu,T) \longrightarrow y_\nu=(1+N)^{x_\nu}H(t)^{p_\nu}$ is a deterministic algorithm run by each user $i$ to mask his private value $x_\nu$ into the maksed value $y_\nu$ using $t \in T_f$. The output $y_\nu$ is published in an insecure channel.

$Aggregate({y_\nu| \forall \nu \in U}) \longrightarrow \sum_{\nu\in U}\left(x_{\nu}\right)$ is run by the aggregator to aggregate all the encoded private values ${C(x_{\nu})}$’s to calculate the sum over $\{x_\nu\}_{\nu\in U}$.

\end{definition}

The security of the aggregation is formally defined via a data publishing game (Figure \ref{fig:data-game}), similar to \cite{jung2016pda}.

\begin{figure}
    \centering
    \begin{mdframed}

\textbf{Setup:)} $3$ disjoint time domains are chosen: $T_1$ for phase
$1$, $T_2$ for phase $2$, and $T_c$ for the challenge phase.

\textbf{Init:)} The adversary declares their role in the scheme (i.e., aggregator or user), and the challenger controls the remaining users. The users engage in the key generation.

\textbf{Phase 1 in $T_1$:)} The adversary submits polynomially many queries to the \textit{masking oracle}\footnote{The masking oracle's role is to return the masked values when $x$, $t$, and $\nu$ are given.} and receive the masked values for any $x$ and any time window $T\subseteq T_1$ and any user in $U$ including those are not adversaries. If the declared time windows do not overlap with each other, the masking oracles returns all masked values to the adversary; otherwise, adversaries receive nothing.

\textbf{Challenge in $T_c$:)} The adversary declares the target time window $T_c$. Then, they submit two sets of values $\{x_{\nu,1}\},\{x_{\nu,2}\}$, such that $\sum_{\nu\in U}x_{\nu,1} = \sum_{\nu\in U}x_{\nu,2}$, to the challenger. The challenger flips a fair binary coin $b$ and generate the masked values $\{y_{\nu,b}\}$ based on $x_{\nu,b}$, which are given to the adversary.

\textbf{Phase 2 in $T_2$:)} Phase 1 is repeated adaptively, but the time window $T$ should be a subset of $T_2$.

\textbf{Guess:)} The adversary gives a guess $b'$ on $b$. The advantage of the adversary in this game is defined $\textit{adv}_{A} = | \mathbf{Pr}[b' - b] - \frac{1}{2}|$. 
\end{mdframed}
    \caption{Data Publishing Game}
    \label{fig:data-game}
\end{figure}

We define the security of the masking in $Feddy$ as follows:

\begin{definition}
\label{MaskDef}
The random masking in the $Feddy$ is indistinguishable against the chosen-plaintext attack (IND-CPA) if all polynomial time adversaries’ advantages in the game are of a negligible function \textit{w.r.t.} the security parameter $\kappa$ when $T_1,T_2$, and $T_c$ are three disjoint time domains.

\end{definition}
 
  Next, we present a standard simulation-based definition of the security \cite{lindell2017simulate, boyle2017group, gordon2015constant}, that is achieved by our Feddy protocol. Note that Feddy is not an encryption scheme, and although we leverage Definition \ref{MaskDef} to prove security later, techniques such as proving IND-CCA or IND-CPA alone do not directly demonstrate the security of the entire protocol.  Informally speaking, the $Feddy$ scheme is private if adversaries do not gain more information than the input that they control, the output, and what can be inferred from each of them. 

We define the security of $Feddy$ as follows:

\begin{definition}\label{secure-def}
The aggregation scheme $Feddy$ for a class of summation functions $F$ is said to be private for $F$ against semi-honest adversaries if for any $f \in F$ and for any probabilistic polynomial time adversary $\mathcal{A}$ controlling a subset $A$ of all players, there exists a probabilistic polynomial-time simulator $\mathcal{S}$ such that for any set of inputs $X := (x_1, \cdots,  x_n)$ in the domain of $f$ where the $i$-th player $P_i$ controls $x_i$,
\begin{displaymath}
  \{\mathcal{S}(f(X), A, \{x_j \mid P_j \in A\})\}_{\kappa} \indist 
  \{\mathsf{View}_A ^{Feddy} (X)\}_\kappa
\end{displaymath}
where $\indist$ refers to computational indistinguishability, $\kappa$ is the security parameter, and $\mathsf{View}_A ^{Feddy} (X)$ represents the messages received by members of $A$ during execution of protocol $Feddy$.
\end{definition}

\subsection{Security Proof}

Before we present the security proofs, we present the computational problems and the hardness assumptions Feddy relies on.

\begin{definition}
Decisional Diffie-Hellman (DDH) problem in a group $\mathbb{G}$ with generator $g$ is to decide whether $g^c=g^{ab}$ given a triple $(g^a,g^b,g^c)$, where $a,b,c\in\mathbb{Z}$. An algorithm $\mathcal{A}$'s advantage in solving the DDH problem is defined as
\begin{displaymath}\begin{split}
adv_{\mathcal{A},\mathbb{G}}^{DDH}&=\Big|\mathbf{Pr}\big[1\leftarrow\mathcal{A}(g^a,g^b,g^{ab}\in\mathbb{G})\big]\\
&-\mathbf{Pr}\big[1\leftarrow\mathcal{A}(g^a,g^b,g^c\in\mathbb{G})\big|c\leftarrow_R \mathbb{Z} \big]\Big|
\end{split}\end{displaymath}
where $1\leftarrow \mathcal{A}(\cdot)$ if the algorithm outputs `yes' and 0 otherwise, and the probabilities are taken over the uniform random selection $c\leftarrow_R \mathbb{Z}$ as well as the random bits of $\mathcal{A}$.
\end{definition}

\begin{definition}
Decisional Composite Residuosity (DCR) problem in $\mathbb{Z}_{N^2}^*$ is to decide whether a given element $x\in\mathbb{Z}_{N^2}^*$ is an $N$-th residue modulo $N^2$ or not.
\end{definition}

The DDH problem in $\mathbb{Z}_N^*$ and the DCR problem are widely belived to be intractable \cite{jung2016pda,joye2013scalable,boneh1998decision}. We will prove the security of Feddy by proving the following theorem.

\begin{theorem}

With the assumptions that the DDH problem is hard in $\mathbb{Z}_N^*$ and that the DCR problem is hard, the random masking in our Feddy scheme is indistinguishable against chosen-plaintext attacks (IND-CPA) under the random oracle model. Namely, for any PPTA $A$, its advantage $adv_A$ in the data publishing game
is bounded as follows:

\[ a d v_{\mathcal{A}} \leq \frac{e\left(q_{c}+1\right)^{2}}{q_{c}} \cdot a d v_{\mathcal{A}}^{D D H} \]

where $e$ is the base of the natural logarithm, $q_{c}$ is the number
of adversaries' queries submitted to the masking oracle, and $adv_\mathcal{A}^{DDH}$ is the advantage in solving the DDH problem. Note that $adv_\mathcal{A}^{DDH}$ is negligibly small since DDH problem is widely believed to be hard.

\end{theorem}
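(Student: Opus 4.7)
The plan is to construct a reduction from any PPT adversary $\mathcal{A}$ succeeding in the data publishing game to a distinguisher for DDH in $\mathbb{Z}_N^*$, handling the extra $(1+N)^{x_\nu}$ factor of the masking via the DCR assumption and programming $H$ in the random oracle model. First I would collapse the masked value to its structurally interesting part: by DCR, the map $x_\nu \mapsto (1+N)^{x_\nu}\cdot R$ for a uniformly random $N$-th residue $R\in\mathbb{Z}_{N^2}^*$ is indistinguishable from uniform on $\mathbb{Z}_{N^2}^*$, so the whole game reduces to showing that the products $H(t)^{p_\nu}$ (restricted to the subgroup $\mathbb{G}_H$ of $N$-th residues) are pseudorandom, subject to the constraint $\sum_\nu p_\nu \equiv 0\pmod N$. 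In particular, for the challenge queries in $T_c$, the equal-sum condition $\sum_\nu x_{\nu,1}=\sum_\nu x_{\nu,2}$ makes the two challenge distributions differ only in how the random mask is partitioned across users.

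Second, I would embed a DDH instance $(g,g^a,g^b,T)$ with $T\in\{g^{ab},g^c\}$ into the simulation by fixing two honest users $\nu_1^\star,\nu_2^\star\in U\setminus A$ and implicitly setting their pair-wise pad $p_{\nu_1^\star,\nu_2^\star}$ to $b$ (equivalently, the simulator sets $p_{\nu_1^\star}\leftarrow b+\rho_1$ and $p_{\nu_2^\star}\leftarrow -b+\rho_2$ for known $\rho_i$, with remaining honest pads chosen freshly to cancel out the adversary's pads and preserve $\sum_\nu p_\nu\equiv0$). For the random oracle $H$, I would apply Coron's programming trick: on each fresh hash query $t$, flip a biased coin and either set $H(t)=g^{r_t}$ (``normal'' case, allowing simulation because $H(t)^{p_{\nu_i^\star}}=(g^b)^{\pm r_t}\cdot g^{r_t\rho_i}$ is computable from $g^b$ and known $\rho_i$) or set $H(t)=(g^a)^{r_t}$ (``target'' case). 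The simulation succeeds when all phase 1/2 queries land in the normal case and the challenge time $t^*\in T_c$ lands in the target case; then $H(t^*)^{p_{\nu_i^\star}}=T^{\pm r_{t^*}}\cdot (g^a)^{r_{t^*}\rho_i}$, which is a pseudorandom mask if $T=g^{ab}$ and an independent random mask if $T=g^c$.

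Third, I would complete the analysis by tracking advantages. Conditional on successful programming, the real DDH case perfectly simulates the real IND-CPA game, while the random case yields an information-theoretically uniform mask on the two challenge users (so $\mathcal{A}$'s advantage collapses to $0$ because $\sum x_{\nu,1}=\sum x_{\nu,2}$ makes the remaining visible products identically distributed). Optimizing the Bernoulli bias in Coron's trick against $q_c$ hash queries (with at most one challenge query on $t^*$) yields the success probability $\Theta(q_c/(q_c+1)^2)\cdot e^{-1}$; inverting this loss bounds $\mathrm{adv}_\mathcal{A}$ by $\frac{e(q_c+1)^2}{q_c}\cdot\mathrm{adv}_\mathcal{A}^{DDH}$, plus a term absorbed by DCR that is negligible under standard assumptions.

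The main obstacle will be keeping the reduction consistent with the algebraic constraint $\sum_{\nu\in U} p_\nu\equiv 0\pmod N$ while embedding unknown exponents in at least two users' pads. Naively perturbing only one honest user's pad by $b$ breaks the zero-sum, so the reduction must perturb a matched pair $(\nu_1^\star,\nu_2^\star)$ with $+b$ and $-b$, which in turn means the simulator must know the adversary-controlled users' pair-wise shares with $\nu_1^\star$ and $\nu_2^\star$ before committing to the DDH embedding; the semi-honest assumption and the declared adversarial users at the \textbf{Init} step of the game are exactly what make this scheduling feasible, so I would verify carefully that these shares can be consistently answered in the correct order during $KeyGen$. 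Finally, plugging the bound into Definition~\ref{secure-def} and composing with a standard simulator $\mathcal{S}$ that samples uniform masks under the known-sum constraint would yield computational indistinguishability of real and simulated views.
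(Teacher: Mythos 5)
Your overall skeleton matches the paper's: both proofs use Coron's biased-coin partitioning of the time windows (coin probability $\tfrac{1}{q_c+1}$, success probability $\tfrac{1}{e(q_c+1)}$), a reduction to DDH inside the subgroup of $N$-th residues, DCR to argue the $(1+N)^{x_\nu}$ factor is blinded once the mask is randomized, and the same accounting that produces the loss factor $\tfrac{e(q_c+1)^2}{q_c}$. Where you diverge is the concrete DDH embedding: the paper's Game 3 replaces \emph{every} user's mask $H(t)^{p_\nu}$ by $(H(t)^{s})^{p_\nu}$ for a common random $s$ and cites Lemma 8 of \cite{jung2016pda} for the indistinguishability step, whereas you embed the triple $(g^a,g^b,T)$ into the single pair-wise pad $p_{\nu_1^\star,\nu_2^\star}$ of two designated honest users.

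That pair embedding has a genuine gap. With $p_{\nu_1^\star}=b+\rho_1$ and $p_{\nu_2^\star}=-b+\rho_2$, the challenge masks become $y_{\nu_1^\star}=(1+N)^{x_{\nu_1^\star}}T^{r_{t^*}}(g^a)^{r_{t^*}\rho_1}$ and $y_{\nu_2^\star}=(1+N)^{x_{\nu_2^\star}}T^{-r_{t^*}}(g^a)^{r_{t^*}\rho_2}$, so the product $y_{\nu_1^\star}\cdot y_{\nu_2^\star}$ is independent of $T$: even in the "random" branch $T=g^c$, the adversary still sees $(1+N)^{x_{\nu_1^\star}+x_{\nu_2^\star}}$ times a quantity it can compute. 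The game's constraint is only $\sum_{\nu}x_{\nu,1}=\sum_{\nu}x_{\nu,2}$ over all of $U$, so the adversary may choose challenge vectors whose \emph{pairwise} partial sums $x_{\nu_1^\star}+x_{\nu_2^\star}$ differ between $b=0$ and $b=1$; it then wins with probability $1$ in the random branch as well, your claimed "advantage collapses to $0$" fails, and the reduction's DDH-distinguishing advantage is no longer lower-bounded by $\mathrm{adv}_\mathcal{A}$ (up to the Coron factor). Repairing this requires randomizing the masks of all honest users jointly, subject only to the zero-sum constraint --- either via a hybrid over the honest users (which introduces an extra factor of roughly $|U|$ that the stated bound does not have) or via a single joint re-randomization of all exponents as in the paper's $H(t)^{s}$ substitution. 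You should also not treat DCR as an afterthought absorbed into a "negligible term": it is needed inside the random branch to argue that a uniform element of $\langle H(t)\rangle$ actually hides $(1+N)^{x_\nu}$ in $\mathbb{Z}_{N^2}^*$, exactly as the paper invokes it in Game 3.
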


We prove the theorem by adapting the proofs from \cite{jung2016pda,joye2013scalable}.

\begin{proof}\label{MaskProof}

To prove the theorem, we present three games Game $1,$ Game $2,$ and Game $3,$ in which we use $\mathcal{A}$ and $\mathcal{B}$ to denote the adversary and the challenger. For each $l \in\{1,2,3\},$ we denote $E_{l}$ as the event that $\mathcal{B}$ outputs 1 in the Game $l$, and we define $a d v_{l}=\left|\mathbf{Pr}\left[E_{l}\right]-\frac{1}{2}\right|$

\textbf{Game 1:)} This game is exactly identical to the earlier data publishing game. $\mathcal{A}$ 's masking queries $(T,\left\{x_{\nu}\right\}_{\nu})$  are answered by returning the masked values $\{y_{\nu}\}_{\nu} .$ In the challenge phase, the adversary $\mathcal{A}$ the associated time window $T_{c}$, and two sets of values $\{x_{\nu,1}\},\{x_{\nu,2}\}$ which satisfy $\sum_{\nu\in U}x_{\nu,1} = \sum_{\nu\in U}x_{\nu,2}$. Then, the challenger $\mathcal{B}$ returns the corresponding masked values to the adversary $\mathcal{A}$. When the game terminates, $\mathcal{B}$ outputs 1 if $b^{\prime}=b$ and 0 otherwise. By definition, $adv_{1}=\left|\mathbf{P r}\left[E_{1}\right]-\frac{1}{2}\right|=a d v_{\mathcal{A}}$

\textbf{Game 2:)} In Game 2 , the adversary $\mathcal{A}$ and the challenger $\mathcal{B}$ repeat the same operations as in Game 1 using the same time windows of those operations. However, for each masking query in Game 1 at time $t \in T_{1} \cup T_{2}$, the challenger $\mathcal{B}$ flips a biased binary $\operatorname{coin} \mu_{T}$ for the entire time window $T$ which takes 1 with probability $\frac{1}{q_{c}+1}$ and 0 with probability $\frac{q_{c}}{q_{c}+1}$. When the Game 2 terminates, $\mathcal{B}$ checks whether any $\mu_{T}=1 .$ If there is any, $\mathcal{B}$ outputs a random bit. Otherwise, $\mathcal{B}$ outputs 1 if $b^{\prime}=b$ and 0 if $b^{\prime} \neq b .$ If we denote $F$ as the event that $\mu_{T_{f}}=1$ for any $T_{f},$ the  analysis in \cite{coron2000exact} shows that $\operatorname{Pr}[\bar{F}]=\frac{1}{e\left(q_{c}+1\right)} .$ According to \cite{joye2013scalable} Game 1 to Game 2 is a transition based on a failure event of large probability, and therefore we have $adv_{2}=adv_{1} \operatorname{Pr}[\bar{F}]=\frac{a d v_{1}}{e\left(q_{c}+1\right)}$.

\textbf{Game 3:)} In this game, the adversary $\mathcal{A}$ and the challenger $\mathcal{B}$ repeat the same operations as in Game 1 using the same time windows of those operations. However, there is a change in the answers to the masking queries $\left(T,\left\{x_{\nu}\right\}_{\nu}\right).$ The oracle will respond to the query with the following masked values:
\[
\forall \nu: y_\nu=\left\{\begin{array}{ll}
(1+N)^{x_{\nu}} H\left(t \right)^{p_\nu} & \mu_{T}=1 \\
(1+N)^{x_{\nu}}\left(H\left(t\right)^{s}\right)^{p_\nu} & \mu_{T}=0
\end{array}\right.
\]
where $s$ is a uniform randomly chosen element from $\mathbb{Z}_{N}$ that is fixed for the same aggregation.  When Game 3 terminates, $\mathcal{B}$ outputs 1 if $b^{\prime}=b$ and 0 otherwise. Due to Lemma 8 from \cite{jung2016pda}, distinguishing Game 3 from Game 2 is at least as hard as a DDH problem in $\mathbb{Z}_N$ for any adversary $\mathcal{A}$ in Game 2 and Game 3. It follows then:
$\left|\operatorname{Pr}\left[E_{2}\right]-\operatorname{Pr}\left[E_{3}\right]\right| \leq a d v_{\mathcal{A}}^{D D H} .$ The answers to masking queries in Game 3 are identical to those of Game 2 with probability $\frac{1}{q_{c}+1}$ and different from those of Game 2 with probability $\frac{q_c}{q_{c}+1}$. In the latter case, due to the random element $s$, $y_\nu$ is uniformly distributed in the subgroup $\langle H(t)\rangle$ which completely blinds $(1+N)^{x_\nu}$ and $\mathcal{A}$ can only randomly guess $b^{\prime}$ unless he can solve the DCR problem with non-negligible advantages (which is false under our assumption that the DCR problem is hard). Then, $b^{\prime}=b$ with probability $1 / 2,$ and the total probability $\operatorname{Pr}\left[E_{3}\right]=\frac{\operatorname{Pr}\left[E_{2}\right]}{q_{c}+1}+\frac{q_{c}}{2\left(q_{c}+1\right)} .$ Then, we have:
\[
\begin{aligned}
\left|\operatorname{Pr}\left[E_{2}\right]-\operatorname{Pr}\left[E_{3}\right]\right| &=\left|\left(\operatorname{Pr}\left[E_{2}\right]-1 / 2\right) \cdot \frac{q_{c}}{q_{c}+1}\right|
&=a d v_{2} \cdot \frac{q_{c}}{q_{c}+1} \leq a d v_{\mathcal{A}}^{D D H}
\end{aligned}
\]
Combining the above inequality with the advantages deduced from Game 1 and Game 2 , we finally have:
\[
a d v_{2} \cdot \frac{q_{c}}{q_{c}+1}=a d v_{\mathcal{A}} \cdot \frac{q_{c}}{e\left(q_{c}+1\right)^{2}} \leq a d v_{\mathcal{A}}^{D D H}
\]
thus completing the proof.

\end{proof} 

With the above, we are now ready to prove the security of $Feddy$ in the well known simulation security model \cite{lindell2017simulate} in the following theorem:

\begin{theorem}
Assuming $|\mathcal{C}|$ $= O(\log \kappa)$, all parties properly instantiate and use the $Feddy$ algorithms in Definition \ref{FeddyDef}, and the random masking in Definition \ref{MaskDef} is used, our scheme $Feddy$ privately computes according to Definition \ref{secure-def}
\end{theorem}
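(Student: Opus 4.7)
The plan is to construct, for any PPT adversary $\mathcal{A}$ controlling a set $A$ of parties, an efficient simulator $\mathcal{S}$ that takes only $(f(X), A, \{x_j\}_{P_j\in A})$ as input and produces a transcript computationally indistinguishable from $\mathsf{View}_A^{Feddy}(X)$. I would split the argument by whether $A$ includes the parameter server, since an honest server never sees the masked values; the user-only corruption case is almost immediate because corrupted users receive nothing during execution beyond the public parameters and their own $EK_\nu$, both of which $\mathcal{S}$ can reproduce perfectly by running $\mathit{Setup}$ and $\mathit{KeyGen}$ internally.

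In the substantive case where the aggregator is corrupted (possibly together with some users in $A$), $\mathcal{S}$ would proceed as follows. First, it honestly executes $\mathit{Setup}$ and simulates $\mathit{KeyGen}$, handing real keys $EK_j$ to the corrupted users while playing the honest users itself. Second, it computes the residual sum $s^\star = f(X) - \sum_{P_j\in A} x_j$, chooses any dummy inputs $\{\tilde x_\nu\}_{P_\nu\notin A}$ for the honest users satisfying $\sum_\nu \tilde x_\nu = s^\star$ (for example, assign $s^\star$ to one fixed honest user and $0$ to the rest), and runs $\mathit{Mask}$ on those dummies with the time tag $t$ drawn from $T_c$ to obtain simulated outputs $\tilde y_\nu$. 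Together with the corrupted users' honestly produced $y_j$, this forms the simulated transcript, and correctness of $Feddy$ (proved earlier) guarantees that the simulated aggregation still yields $f(X)$.

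Indistinguishability would then follow by a hybrid argument. Let $H_0$ be the real view and $H_k$ (for $k$ the number of honest users) be the simulated view, where consecutive hybrids swap one honest user's real $(x_\nu, y_\nu)$ pair for the simulated $(\tilde x_\nu, \tilde y_\nu)$ while preserving the aggregate. Any distinguisher between adjacent hybrids can be converted into an adversary in the data publishing game of Definition \ref{MaskDef}, since the change affects only one honest user's masked output while keeping the aggregate fixed -- exactly the challenge condition $\sum_\nu x_{\nu,1}=\sum_\nu x_{\nu,2}$. Each hybrid gap is therefore bounded by the IND-CPA advantage, which the preceding theorem shows is negligible under DDH and DCR. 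The hypothesis $|\mathcal{C}|=O(\log\kappa)$ keeps the hybrid chain short enough that a union bound over the chain remains negligible.

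The main obstacle will be the joint dependency of the pads: because $p_\nu = \sum_{\mu\in U} p_{\nu,\mu}$, swapping one honest user's mask in isolation inevitably touches the pairwise pads held by every other user (including the pads derived from the keys $EK_j$ the corrupted set already possesses). The reduction to IND-CPA must therefore program the shared pairwise values carefully so that the keys revealed to $A$, the challenge ciphertext, and the masked outputs of all other honest users remain jointly consistent; concretely, the reduction would embed the IND-CPA challenge's randomness into the single pad contribution of the hybrid-target user while simulating every other pad contribution itself. Verifying that this embedding yields an identically distributed view -- and that no side information about the remaining honest users' keys leaks through the aggregate -- is the delicate bookkeeping step I expect will absorb most of the formal write-up.
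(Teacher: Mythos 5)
Your overall strategy coincides with the paper's: the simulator substitutes dummy honest inputs whose sum equals the residual $f(X)-\sum_{P_j\in A}x_j$, masks them honestly, and argues the resulting transcript is indistinguishable from the real one by the security of the masking. Where you differ is in how the indistinguishability step is discharged. The paper simply asserts that real and simulated masked values ``are generated identically and have the exact same distribution,'' citing the IND-CPA property of the masking in one shot; you instead give an explicit hybrid argument reducing each swap to the data publishing game of Definition~\ref{MaskDef}, and you are right that this is where the real work lies --- in particular your observation about the correlated pads $p_\nu=\sum_{\mu\in U}p_{\nu,\mu}$ identifies a genuine subtlety the paper glosses over. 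Two small cautions. First, your per-user hybrid as stated does not preserve the aggregate: replacing a single honest user's $x_\nu$ by $\tilde x_\nu\neq x_\nu$ changes the sum, so you must either swap users in compensating pairs or note that the data publishing game already accepts two \emph{full} vectors of equal sum, which lets you do the replacement of all honest users in a single reduction rather than one at a time. Second, the paper uses the hypothesis $|\mathcal{C}|=O(\log\kappa)$ to let the simulator invert $f$ efficiently, whereas you use it to bound the hybrid chain length; for a summation functional inversion is trivial (as your own construction shows) and the number of honest users is polynomial regardless, so neither use is really load-bearing, but you should be aware your reading departs from the paper's. Neither point undermines the argument; your write-up is, if anything, more honest about where the proof obligation sits than the paper's own.
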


\begin{proof}
Without loss of generality, assume the adversary controls the first $m < n$ variables, where $n$ is the total number of all variables in the scheme (i.e. each person $j$ participating in the protocol inputs a collection of private variables (e.g. $x_{j,1}, \cdots, x_{j,r_j}$) where user $j$ controls $r_j$ variables and the total number of variables for all participants sums to $n$ variables).  We show that a probabilistic polynomial time simulator can generate an entire simulated view, given $z=f(x_1, \cdots, x_m, x_{m+1}, \cdots, x_n)$ and $x_1, \cdots x_m$ for an adversary indistinguishable from the view an adversary sees in a real execution of the protocol.  Note the simulator is able to find $x_{m+1}', \cdots, x_n'$ such that $z = f(x_1', \cdots, x_m',$ $x_{m+1}', \cdots, x_n')$ in polynomial time since $|\mathcal{C}| = O(\log \kappa)$.  Besides this, the adversary follows the protocol as described in Figure \ref{FeddyDef}, pretending to be honest.  Note that in any world (i.e. $Ideal$ or $Real$), the adversary can only compute the function over the fixed values submitted by the honest users, because the adversary can only access the fully aggregated set of submitted values.  Individual values submitted by the honest players are secure by the IND-CPA property of the underlying masking scheme as shown in the proof of Theorem \ref{MaskProof}. Now the simulator $\mathcal{S}$ generates a view indistinguishable from that of a real execution, since all parameters broadcast i.e. $x'_\nu$ masked by $p'_\nu$, are indistinguishable from the corresponding ones in the real protocol i.e. $x_\nu$ masked by $p_\nu$ as they are generated identically and have the exact same distribution i.e. a uniformly random distribution.  More specifically all masked values are indistinguishable in both worlds as both sets of masks are generated at random and will have the same random distribution, i.e. the masked values in both worlds will be indistinguishable from random, and thus indistinguishable from each other. Recall that the aggregator does not send the result to users (note that if the users tamper with their submissions the aggregator will likely detect it since the outcome of the protocol will not be in a valid range; the aggregator can abort if necessary depending on the use case of the protocol).  Similarly, masked values sent to the aggregator in both worlds will also be indistinguishable by the assumption that the users correctly follow the protocol, which utilizes IND-CPA secure masking, and the fact that we chose inputs that give the same output. By the security of the DCR and DDH problems, no information can be gained by an adversary intercepting messages. This demonstrates that for the class of functions $F$, our protocol is secure against adversaries $A$ since:
\[
  \{\mathcal{S}(f(X), A, \{x_j \mid P_j \in A\})\}_\kappa \indist 
  \{\mathsf{View}_A ^{Feddy} (X)\}_\kappa
\]

Therefore, the adversary cannot distinguish between real and simulated executions and our protocol securely computes as defined in Definition \ref{secure-def}. \end{proof}

\end{document}